\let\counterwithin\relax
\newcommand{\Cat}[1]{\ensuremath{{\mathbf{#1}}}} 
\newcommand{\CatC}{\Cat{C}}
\newcommand{\Set}{{\Cat{Set}}} 
\newcommand{\TAlt}{T_\mathtt{Alt}}
\theoremstyle{plain}
\newtheorem{theorem}{Theorem}[section]
\newtheorem{lemma}[theorem]{Lemma}
\newtheorem{proposition}[theorem]{Proposition}
\theoremstyle{definition}
\newtheorem{construction}[theorem]{Construction}
\newtheorem{definition}[theorem]{Definition}
\newtheorem{example}[theorem]{Example}
\theoremstyle{remark}
\newlength{\strutheight}
\newcommand{\id}{\mathsf{id}}
\newcommand{\nto}{\Rightarrow}
\newcommand{\Ps}{\mathcal{P}}
\newcommand{\F}{\mathbb{F}}
\newcommand{\eword}{\varepsilon}
\newcommand*{\smallcircled}[1]{{\tikz[baseline=(X.base)]\node(X)[draw,shape=circle,inner sep=0]{\text{\scriptsize\strut$#1$}};}}
\title{A (co)algebraic theory of succinct automata\tnoteref{grants}}
\author[1]{Gerco van Heerdt}
\author[1,2]{Joshua Moerman}
\author[1]{Matteo Sammartino}
\author[1]{Alexandra Silva}
\address[1]{University College London}
\address[2]{Radboud University}
\tikzset{every state/.style={minimum size=0pt}}
\begin{document}

\begin{abstract}
	The classical subset construction for non-deterministic automata can be generalized to other side-effects captured by a monad.
	The key insight is that both the state space of the determinized automaton and its semantics---languages over an alphabet---have a common algebraic structure: they are Eilenberg-Moore algebras for the powerset monad.
	In this paper we study the reverse question to determinization.
	We will present a construction to associate succinct automata to languages based on different algebraic structures.
	For instance, for classical regular languages the construction will transform a deterministic automaton into a non-deterministic one, where the states represent the join-irreducibles of the language accepted by a (potentially) larger deterministic automaton.
	Other examples will yield alternating automata, automata with symmetries, CABA-structured automata, and weighted automata.
\end{abstract}

\maketitle

\section{Introduction}
Non-deterministic automata are often used to provide compact representations of regular languages. Take, for instance, the language 
\[
	\mathcal L = \{ w\in\{a,b\}^* \mid |w|>2 \text{ and the $3^{\mathrm{rd}}$ symbol from the right is an $a$}\} .
\]
There is a simple non-deterministic automaton accepting it (below, top automaton) and it is not very difficult to see that the smallest deterministic automaton (below, bottom automaton) will have $8$ states.
\begin{center}\vspace{-.3cm}
\begin{tikzpicture}[initial text={},->,>=stealth',shorten >=1pt,auto,node distance=11ex,semithick]
\node[initial,state] (0) {$s_1$};	
\node[state] (1) [right of = 0] {$s_2$};
\node[state] (2) [right of = 1] {$s_3$};
\node[state,accepting] (3) [right of = 2] {$s_4$};

\path 
(0) edge node {$a$} (1)
(0) edge[loop below] node {$a,b$} (0)
(1) edge node {$a,b$} (2)
(2) edge node {$a,b$} (3);
\end{tikzpicture}\vspace{-.8cm}
\begin{tikzpicture}[initial text={},->,>=stealth',shorten >=1pt,auto,node distance=11ex,semithick]
\node[initial,state] (0a) [right of = 3, node distance=16ex] {\tiny$\bf 1$};	
\node[state] (1a) [right of = 0a] {\tiny$\bf 12$};
\node[state] (2a) [right of = 1a] {\tiny$\bf 123$};
\node[state,accepting] (3a) [right of = 2a] {\tiny$\bf 1234$};
\node[state,accepting] (4) [below of = 0a] {\tiny$\bf 14$};
\node[state] (5) [below of = 1a] {\tiny$\bf 13$};
\node[state,accepting] (6) [below of = 2a] {\tiny$\bf 124$};
\node[state,accepting] (7) [below of = 3a] {\tiny$\bf 134$};

\path 
(0a) edge node {$a$} (1a)
(0a) edge[out=240,in=200,looseness=8] node {$b$} (0a)
(1a) edge node {$a$} (2a)
(2a) edge node {$a$} (3a)
(3a) edge[loop above] node {$a$} (0a)
(6) edge node {$a$} (2a)
(7) edge [swap] node {$a$} (6)
(7) edge [bend left, swap] node {$b$} (4)
(6) edge [bend right] node [swap] {$b$} (5)
(3a) edge node {$b$} (7)
(2a) edge node {$b$} (7)
(4) edge node {$a$} (1a)
(4) edge node {$b$} (0a)
(5) edge node {$a$} (6)
(5) edge [swap] node {$b$} (4)
(1a) edge node {$b$} (5)
;
\end{tikzpicture}	
\end{center}\vspace{-.2cm}
The labels we chose for the states of the deterministic automaton are not coincidental---they represent the subsets of states of the non-deterministic automaton that would be obtained when constructing a deterministic one using the classical subset construction.

The question we want to study in this paper has as starting point precisely the observation that non-deterministic automata provide compact representations of languages and hence are more amenable to be used in algorithms and promote scalability. In fact, the origin of our study goes back to our own work on automata learning~\cite{moerman2017}, where we encountered large nominal automata that, in order for the algorithm to work for more realistic examples, had to be represented non-deterministically. In other recent work~\cite{bollig2009,angluin2015}, different forms of non-determinism are used to learn compact representations of regular languages. This left us wondering whether other {\em side-effects} could be used to overcome scalability issues.

Moggi~\cite{DBLP:journals/iandc/Moggi91} introduced the idea that {\em monads} could be used a general abstraction for side-effects. A monad is a triple $(T,\eta,\mu)$ in which $T$ is an endofunctor over a category whose objects can be thought of as capturing pure computations. The monad is equipped with a unit $\eta \colon X\to TX$, a natural transformation that enables embedding any pure computation into an effectful one, and a multiplication $\mu \colon TTX \to TX$ that allows flattening nested effectful computations. Examples of monads capturing side-effects include powerset (non-determinism) and distributions (randomness).

Monads have been used extensively in programming language semantics (see e.g.~\cite{Swamy:2011:LMP:2034773.2034778} and references therein). More recently, they were used in categorical studies of automata theory~\cite{DBLP:conf/dlt/Bojanczyk15}. One example of a construction in which they play a key role is a generalization of the classical subset construction to a class of automata~\cite{SilvaBBR13,DBLP:conf/fsttcs/SilvaBBR10}, which we will describe next. 

The classical subset construction, connecting non-deterministic and deterministic automata, can be described concisely by the following diagram.
\begin{align*}
	\begin{tikzcd}[column sep=.7cm,row sep=.7cm,ampersand replacement=\&]
		X \ar{d}[swap]{\delta} \ar{r}{\{-\}} \&
			\mathcal{P}(X) \ar{dl}{\delta^\sharp} \ar[dashed]{r}{l} \&
			2^{A^*} \ar{d}{<\epsilon?,\partial>} \\
		2 \times \mathcal{P}(X)^A \ar[dashed]{rr}{\id \times l^A} \&
			\&
			2 \times (2^{A^*})^A
	\end{tikzcd}
\end{align*}	
	 We omit initial states and represent a non-deterministic automaton as a pair $(X,\delta)$ where $X$ is the state space and $\delta \colon X \to 2\times \mathcal{P}(X)$  is the transition function which has in the first component the (non-)final state classifier.  The language semantics of a non-deterministic automaton $(X, \delta)$ is obtained by first constructing a deterministic automaton $( \mathcal{P}(X), \delta^\sharp)$ which has a larger state space consisting of subsets of the original state space and then computing the accepted language of the determinized automaton. The language map $l$ associating the accepted language to a state is a universal map: for every deterministic automaton $(Q, Q \to 2\times Q^A)$ the map $l$ is the unique map into the automaton of languages $(2^{A^*}, 2^{A^*} \xrightarrow{<\epsilon?,\partial>} 2 \times (2^{A^*})^A)$.
	
The universal property of the automaton of languages inspired the development of a categorical generalization of automata theory, including of the subset construction which we detail below. In particular, we can consider {\em general} automata as pairs $(X, X \xrightarrow{t} FX)$ where the transition dynamics $t$ is parametric on a functor $F$. Such pairs are usually called coalgebras for the functor $F$~\cite{jan-universal}.  For a wide class of functors $F$, the category of coalgebras has a final object $(\Omega, \omega)$, the so-called \emph{final} coalgebra, which plays the analogue role to languages.

The classical subset construction was generalized in previous work~\cite{SilvaBBR13} by replacing deterministic automata with coalgebras for a functor $F$ and the powerset monad with a suitable monad $T$. As above, it can be summarized in a diagram:
\begin{align*}
		\begin{tikzcd}[column sep=.7cm,row sep=.7cm,ampersand replacement=\&]
			X \ar{d}[swap]{\delta} \ar{r}{\eta} \&
			TX \ar{dl}{\delta^\sharp} \ar[dashed]{r}{l} \&
				\Omega \ar{d}{\omega} \\
			FTX \ar[dashed]{rr}{Fl} \&
				\&
				F\Omega
		\end{tikzcd}
\end{align*}
The monad $T$ will be the structure we will explore to enable succinct representations. 
The crucial ingredient in generalizing the subset construction was the observation that the target of the transition dynamics---$2\times \Ps(-)^A$---and the set of languages---$2^{A^*}$---both have a complete join-semilattice structure. This enables one to define the determinized automaton as a unique lattice extension of the non-deterministic one, and, moreover, the language map $l$ preserves the semantics: $l(\{s_1,s_2\}) = l(\{s_1\})\cup l(\{s_2\})$.

This latter somewhat trivial observation was also exploited in the work of Bonchi and Pous~\cite{BonchiP15} in defining an efficient algorithm for language equivalence of NFAs by using coinduction-up-to.
Join-semilattices are precisely the Eilenberg-Moore algebras of the powerset monad, and one can show that if a functor has a final coalgebra in $\Set$, this can be lifted to the category of Eilenberg-Moore algebras of a monad $T$ ($T$-algebras).
This makes it possible to construct the more general diagram above, where the coalgebra structure is generalized using a functor $F$ and a monad $T$. The only assumptions for the existence of $T$-algebra maps $\delta^\sharp$ and $l$ are the existence of a final coalgebra for $F$ in $\Set$ and that $FTX$ can be given a $T$-algebra structure.

In this paper we ask the reverse question---given a deterministic automaton, if we assume the state space has a join-semilattice structure, can we build a corresponding succinct non-deterministic one?
More generally, given an $F$-coalgebra in the category of $T$-algebras, can we build a succinct $FT$-coalgebra in the base category that represents the same behavior?

We will provide an abstract framework to understand this construction, based on previous work by Arbib and Manes~\cite{arbib1975_}.
Our abstract framework relies on alternative, more modern, presentation of some of their results.
Due to our focus on set-based structures, we will conduct our investigation within the category $\Set$, which enables us to provide effective procedures.
This does mean that not all of the results due to Arbib and Manes will be given in their original generality.
We present a comprehensive set of examples that will illustrate the versatility of the framework.
We also discuss more algorithmic aspects that are essential if the present framework is to be used as an optimization, for instance as part of a learning algorithm.

After recalling basic facts about monads and structured automata in Section~\ref{sec:preliminaries}, the rest of this paper is organized as follows:
\begin{itemize}
	\item In Section~\ref{sec:succinct} we introduce a general notion of \emph{generators} for a $T$-algebra, and we show that automata whose state space form a $T$-algebra---which we call $T$-automata---admit an equivalent \emph{$T$-succinct automaton}, defined over generators. We also characterize \emph{minimal} generators and give a condition under which they are globally minimal in size.
	\item In Section~\ref{sec:minimization} we give an effective procedure to find a minimal set of generators for a $T$-algebra, and we present an algorithm that uses that procedure to compute the $T$-succinct version of a given $T$-automaton. The algorithm works by first minimising the $T$-automaton: the explicit algebraic structure allows states that correspond to algebraic combinations of other states to be detected, and then discarded when generators are computed.
 	\item In Section~\ref{sec:main} we show how the algorithm of Section~\ref{sec:minimization} can be applied to ``plain'' finite automata---without any algebraic structure---in order to derive an equivalent $T$-succinct automaton. We conclude with a result about the compression power of our construction: it produces an automaton that is \emph{at least as small} as the minimal version of the original automaton.
 
	\item Finally, in Section~\ref{sec:examples} we give several examples, and in Section~\ref{sec:conclusions} we discuss related and future work.
\end{itemize}

\section{Preliminaries}\label{sec:preliminaries}
Side-effects and different notions of non-determinism can be conveniently captured as a \emph{monad} $T$ on a category \CatC.   A \emph{monad} $T = (T,\mu,\eta)$ is a triple consisting of an
endofunctor $T$ on $\CatC$ and two natural transformations: a
\emph{unit} $\eta\colon \mathtt{Id} \Rightarrow T$
and a \emph{multiplication} $\mu\colon
T^2 \Rightarrow T$.
They satisfy the following laws:
\begin{align*}
	\mu \circ \eta{T} = \id = \mu \circ T\eta &
		&
		\mu \circ \mu{T} = \mu \circ T\mu.
\end{align*}
An example is the triple $(\Ps, \{-\}, \bigcup)$ where $\Ps$ denotes the powerset functor in $\Set$ that assigns to each set the set of all its subsets, $\{-\}$ is the function that returns a singleton set, and $\bigcup$ is just union of sets.

Given a monad $T$, the category of  $\CatC^T$ of Eilenberg-Moore algebras over $T$, or simply $T$-algebras, has as objects pairs $(X,h)$ consisting of an object $X$, called
carrier, and a morphism $h\colon TX \rightarrow X$  such that $h \circ \mu_X = h \circ Th$
and $h \circ \eta_X = \id_X$. A $T$-homomorphism between two $T$-algebras $(X,h)$ and $(Y,k)$
is a morphism $f\colon X \to Y$ such that $f \circ h = k \circ Tf$. 

We will often refer to a $T$-algebra $(X, h)$ as $X$ if $h$ is understood or if its specific definition is irrelevant.
Given an object $X$, $(TX, \mu_X)$ is a $T$-algebra called the \emph{free $T$-algebra} on $X$. Given an object $U$ and a $T$-algebra $(V, v)$, there is a bijective correspondence between $T$-algebra homomorphisms $TU \to V$ and morphisms $U \to V$: for a $T$-algebra homomorphism $f \colon TU \to V$, define $f^\dagger = f \circ \eta \colon U \to V$; for a morphism $g \colon U \to V$, define $g^\sharp = v \circ Tg \colon TU \to V$.
Then $g^\sharp$ is a $T$-algebra homomorphism called the \emph{free $T$-extension of $g$}, and we have
\begin{align}\label{eq:sharpinv}
	f^{\dagger\sharp} = f &
		&
		g^{\sharp\dagger} = g.
\end{align}
Furthermore, for all objects $S$ and morphisms $h \colon S \to U$,
\begin{equation}\label{eq:sharpcomp}
	g^\sharp \circ Th = (g \circ h)^\sharp.
\end{equation}

\begin{example}
	For the monad $\Ps$ the associated Eilenberg-Moore category is the category of (complete) join-semilattices. Given a set $X$, the free $\Ps$-algebra on $X$ is the join-semilattice $(\Ps X,\bigcup)$ of subsets of $X$ with the union operation as join.
\end{example}

Although some results are completely abstract, the central definition of minimal generators in Section~\ref{sec:succinct} is specific to monads $T$ on the category $\Set$.
Therefore we restrict ourselves to this setting.
More precisely, we consider automata over a finite alphabet $A$ with outputs in a set $O$.
In order to define automata in $\Set^T$ as (pointed) coalgebras for the functor $O \times (-)^A$, we need to lift this functor from $\Set$ to $\Set^T$.
Such a lifting corresponds to a \emph{distributive law} of $T$ over $O \times (-)^A$~\cite[see e.g.,][]{johnstone1975}.
A distributive law of the monad $T$ over a functor $F \colon \Set \to \Set$ is a natural transformation $\rho \colon TF \nto FT$ satisfying $\rho \circ \eta{F} = F\eta$ and $F\mu \circ \rho{T} \circ T\rho = \rho \circ \mu{F}$.
In most examples we will define a $T$-algebra structure $\beta \colon TO \to O$ on $O$, which is well known to induce a distributive law $\rho \colon T(O \times (-)^A) \nto O \times T(-)^A$ given by
\begin{equation}\label{eq:stddl}
	\rho_X = T(O \times X^A) \xrightarrow{\langle{}T\pi_1, T\pi_2\rangle} TO \times T(X^A) \xrightarrow{\beta \times \rho'_X} O \times T(X)^A
\end{equation}
for any set $X$, where $\rho'(U)(a) = T(\lambda f \colon A \to X. f(a))$.
In general, we assume an arbitrary distributive law $\rho \colon T(O \times (-)^A) \nto O \times T(-)^A$, which gives us the following notion of automaton.
\begin{definition}[$T$-automaton]
	A \emph{$T$-automaton} is a triple $(X, i \colon 1 \to X, \delta \colon X \to O \times X^A)$, where $X$ is an object of $\Set^T$ denoting the state space of the automaton, $i$ is a function designating the initial state, and $\delta$ is a $T$-algebra map assigning an output and transitions to each state.
\end{definition}
Notice that the initial state map $i \colon 1 \to X$ in the above definition is not required to be a $T$-algebra map.
However, it corresponds to the $T$-algebra map $i^\sharp \colon T1 \to X$.
Thus, a $T$-automaton is an automaton in $\Set^T$.

The functor $F(X) = O \times X^A$ has a final coalgebra in $\Set^T$~\cite{jacobs2012} that can be used to define the \emph{language} accepted by a $T$-automaton.

\begin{definition}[Language accepted]\label{def:language}
	Given a $T$-automaton $(X, i\colon 1\to X, \delta\colon X \to O \times X^A)$, the language accepted by $X$ is $l \circ i \colon 1 \to O^{A^*}$, where $l$ is the final coalgebra map.
	In the diagram below, $\omega$ is the final coalgebra.
	\[
		\begin{tikzcd}
			1 \ar{dr}{i} \\[-20pt]
			&
				X \ar{dd}{\delta} \ar[dashed]{r}{l} &
				O^{A^*} \ar{dd}{\omega} &
				\omega(\varphi) = (\varphi(\eword), \lambda a\;w. \varphi(aw)) \\[-20pt]
			&
				&
				&
				l(x)(\eword) = \pi_1(\delta(x)) \\[-20pt]
			&
				O \times X^A \ar[dashed]{r}{\id \times l^A} &
				O \times (O^{A^*})^A &
				l(x)(aw)= l(\pi_2(\delta(x))(a))(w)
		\end{tikzcd}
	\]
	We use $\eword$ to denote the empty word.
\end{definition}

If the monad $T$ is finitary, then the category $\Set^T$ is locally finitely presentable, and hence it admits (strong epi, mono)-factorizations~\cite{adamek1994}.
As in \cite{arbib1975_}, we use these factorizations to quotient the state-space of an automaton under language equivalence.
The transition structure, $\gamma$, is obtained by diagonalization via the factorization system.
Diagramatically:
\begin{equation}\label{eq:obs}
	\begin{tikzcd}[column sep=1.0cm,row sep=.7cm]
		1 \ar{d}[swap]{i} \ar{dr}{j} \\
		X \ar{d}[swap]{\delta} \ar[two heads]{r}{e} &
			M \ar[dashed]{d}{\gamma} \ar[tail]{r}{m} &
			O^{A^*} \ar{d}{\omega} \\
		O \times X^A \ar{r}{\id \times e^A} &
			O \times M^A \ar[tail]{r}[pos=.6]{\id \times m^A} &
			O \times (O^{A^*})^A
	\end{tikzcd}
\end{equation}
Here the epi $e$ and mono $m$ are obtained by factorizing the final coalgebra map $l \colon X \to O^{A^*}$.
We call the quotient automaton $(M,j,\gamma)$ the \emph{observable quotient} of $(X,i,\delta)$.

\section{$T$-succinct automata}\label{sec:succinct}

Given a $T$-automaton $\mathcal{X} = (X,i,\delta)$,
our aim is to obtain an equivalent automaton in $\Set$ with transition function $Y \to O \times T(Y)^{A}$, where $Y$ is smaller than $X$.\footnote{%
	Here, we are abusing notation and using $O$ and $A$ for both the objects in $\Set^T$ and in the base category $\Set$.
	In particular, we use $TC$ to also denote the free $T$-algebra over $C$.
}
The key idea is to find \emph{generators} for $X$.
Our definition of generators is equivalent to the definition of a \emph{scoop} due to Arbib and Manes~\cite[Section~7, Definition~8]{arbib1975_}.

\begin{definition}[Generators for an algebra]
	We say that a set $G$ is a set of generators for a $T$-algebra $X$ whenever there exists a function $g \colon G \to X$ such that $g^\sharp \colon TG \to X$ is a split epi in $\Set$.
\end{definition}
The intuition of requiring a split epi is that every element of $X$ can now be decomposed into a ``combination'' (defined by $T$) of elements of $G$.
We show two simple results on generators, which will allow us to find initial sets of generators for a given $T$-algebra.
\begin{lemma}\label{lem:genself}
	The carrier of any $T$-algebra $X$ is a set of generators for it.
\end{lemma}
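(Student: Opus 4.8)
The plan is to exhibit the witnessing function explicitly: take $g = \id_X \colon X \to X$. Writing the $T$-algebra as $(X,h)$, the free $T$-extension of $g$ unfolds to $g^\sharp = h \circ T\id_X = h \colon TX \to X$, so the claim reduces to showing that the structure map $h$ is a split epi in $\Set$.

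For this I would simply invoke the unit axiom of Eilenberg--Moore algebras, $h \circ \eta_X = \id_X$, which exhibits $\eta_X \colon X \to TX$ as a section of $h$. Hence $h = g^\sharp$ is a split epimorphism (in $\Set$, and in fact already in $\Set^T$), so $G = X$ with the map $g = \id_X$ satisfies the definition of a set of generators.

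There is no real obstacle here: the only thing to be careful about is matching the bookkeeping of the $(-)^\sharp$ notation from the preliminaries, namely that $g^\sharp = v \circ Tg$ with $v = h$ and $g = \id_X$ collapses to $h$ itself, after which the algebra unit law does all the work.
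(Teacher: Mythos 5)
Your proof is correct and is essentially identical to the paper's: both take $g = \id_X$, observe $\id_X^\sharp$ collapses to the structure map, and use the unit law $h \circ \eta_X = \id_X$ to exhibit $\eta_X$ as a section. The only quibble is your parenthetical claim that the splitting holds ``already in $\Set^T$'': the section $\eta_X$ is generally \emph{not} a $T$-algebra homomorphism, but the definition only asks for a split epi in $\Set$, so this does not affect the argument.
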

\begin{proof}
	Let $TX \xrightarrow{\chi} X$ be the $T$-algebra structure on $X$.
	Then $\id_X$ satisfies $\id_X^\sharp = \chi$, and $\chi$ is a split epi because it is required to satisfy $\chi \circ \eta_X = \id_X$.
\end{proof}

\begin{lemma}\label{lem:genfree}
	Any set $X$ is a set of generators for the free $T$-algebra $TX$.
\end{lemma}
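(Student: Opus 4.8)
The plan is to exhibit an explicit generating function and compute its free extension directly. The natural candidate for the function $g \colon X \to TX$ witnessing that $X$ generates the free $T$-algebra $(TX, \mu_X)$ is the unit $\eta_X \colon X \to TX$ itself, since intuitively every formal combination of elements of $TX$ should already be expressible as a combination of the generators $\eta_X(x)$.

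Concretely, I would take $g = \eta_X$ and compute $g^\sharp$ using its definition. Recall that for a $T$-algebra $(V,v)$ and a morphism $g \colon U \to V$ we have $g^\sharp = v \circ Tg$; here the relevant algebra is the free one $(TX, \mu_X)$, so $g^\sharp = \mu_X \circ T\eta_X$. By the left unit law of the monad, $\mu \circ T\eta = \id$, hence $g^\sharp = \id_{TX}$. Since the identity map is trivially a split epi (it is its own section), this shows that $X$, via $g = \eta_X$, is a set of generators for $TX$.

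There is no real obstacle here: the statement is essentially a restatement of one of the monad unit laws, and the only thing to be careful about is unwinding the definition of $(-)^\sharp$ relative to the correct algebra structure $\mu_X$ rather than some other structure. I would keep the proof to a single line or two, mirroring the style of the proof of Lemma~\ref{lem:genself}.

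\begin{proof}
	Take $g = \eta_X \colon X \to TX$. By definition of the free $T$-extension with respect to the free $T$-algebra $(TX, \mu_X)$, we have $g^\sharp = \mu_X \circ T\eta_X$, which equals $\id_{TX}$ by the monad unit law. In particular $g^\sharp$ is a split epi, so $X$ is a set of generators for $TX$.
\end{proof}
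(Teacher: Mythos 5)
Your proof is correct and is essentially the paper's argument: the paper likewise witnesses generation by $\eta_X$ and uses $\eta_X^\sharp = \id_{TX}$, which is exactly the unit law $\mu_X \circ T\eta_X = \id_{TX}$ you invoke. You merely spell out the one-line computation that the paper leaves implicit.
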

\begin{proof}
	Follows directly from the fact that $\eta_X \colon X \to TX$ satisfies $\eta_X^\sharp = \id_{TX}$.
\end{proof}
Once we have a set of generators $G$ for $X$, we can define an equivalent \emph{free representation} of $\mathcal{X}$, that is, an automaton whose state space is freely generated from $G$.
\begin{proposition}[Free representation of an automaton~{\cite[Section~7, Proposition~9]{arbib1975_}}]\label{prop:free-rep}
	The free algebra $TG$ forms the state space of an automaton equivalent to $\mathcal{X}$.
\end{proposition}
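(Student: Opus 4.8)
The plan is to build the claimed automaton on $TG$ by transporting the dynamics of $\mathcal{X}$ along the generators, and then to exhibit a $T$-automaton morphism back to $\mathcal{X}$ that witnesses language equivalence. Fix the function $g \colon G \to X$ from the definition of generators, write $p = g^\sharp \colon TG \to X$ for the induced $T$-algebra map, and choose a section $s \colon X \to TG$ in $\Set$ with $p \circ s = \id_X$ (it exists since $p$ is a split epi in $\Set$). Since $TG$ is free on $G$, a $T$-algebra map out of $TG$ is determined by an arbitrary function out of $G$; accordingly, define the transition map $\delta' \colon TG \to O \times (TG)^A$ as the free $T$-extension $h^\sharp$ of
\begin{equation*}
	h = \bigl(G \xrightarrow{g} X \xrightarrow{\delta} O \times X^A \xrightarrow{\id \times s^A} O \times (TG)^A\bigr),
\end{equation*}
and set $i' = s \circ i \colon 1 \to TG$. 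The codomain $O \times (TG)^A$ carries the $T$-algebra structure coming from the distributive law $\rho$, so $h^\sharp$ is a genuine $T$-algebra map and $(TG, i', \delta')$ is a $T$-automaton. Defining $\delta'$ as a free extension — rather than by naive pre/postcomposition with $s$ — is precisely what keeps the construction inside $\Set^T$, since $s$ itself need not be a $T$-algebra map.

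Next I would show that $p \colon (TG, i', \delta') \to (X, i, \delta)$ is a morphism of $T$-automata. Preservation of initial states is immediate: $p \circ i' = p \circ s \circ i = i$. For the transitions, both $(\id \times p^A) \circ \delta'$ and $\delta \circ p$ are $T$-algebra maps $TG \to O \times X^A$ — the first because $\id \times p^A$ is the action on $p$ of the lifted functor $\overline{F} = O \times (-)^A$ on $\Set^T$, the second because $\delta$ and $p$ are $T$-algebra maps — so by freeness of $TG$ it suffices to check that they agree after precomposition with $\eta_G$. Using $h^\sharp \circ \eta_G = h$ and $g^\sharp \circ \eta_G = g$ from \eqref{eq:sharpinv}, the left-hand side becomes $(\id \times p^A) \circ h = (\id \times (p \circ s)^A) \circ \delta \circ g = \delta \circ g$, while the right-hand side becomes $\delta \circ g$ as well. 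Hence $(\id \times p^A) \circ \delta' = \delta \circ p$, so $p$ is an $\overline{F}$-coalgebra morphism in $\Set^T$ that commutes with the points.

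Finally, equivalence follows by finality: if $l_X$ and $l_{TG}$ denote the final $\overline{F}$-coalgebra maps of $(X, \delta)$ and $(TG, \delta')$, then $l_X \circ p$ is an $\overline{F}$-coalgebra morphism into the final coalgebra and therefore equals $l_{TG}$. Consequently the language of $(TG, i', \delta')$ is $l_{TG} \circ i' = l_X \circ p \circ s \circ i = l_X \circ i$, which is the language of $\mathcal{X}$. The only real subtlety is the one already flagged: $s$ lives only in $\Set$, so the coalgebra on $TG$ must be produced via the free–forgetful adjunction (as $h^\sharp$) rather than by transporting $\delta$ directly; once that is set up, the remainder is a short diagram chase using \eqref{eq:sharpinv}.
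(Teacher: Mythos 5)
Your construction is exactly the paper's: the same initial state $s \circ i$, the same generator-indexed map $(\id \times s^A) \circ \delta \circ g$ freely extended to $TG$, the same witnessing morphism $g^\sharp$, and the same appeal to finality to conclude language equivalence. The only divergence is in how the homomorphism square $(\id \times (g^\sharp)^A) \circ \gamma^\sharp = \delta \circ g^\sharp$ is verified. The paper performs an explicit diagram chase through $TFTG \to FT^2G \to FTG$, invoking naturality of the distributive law $\rho$, the algebra-map property of $\delta$ and $g^\sharp$, and the definition of the free extension via $\mu$. You instead observe that both sides of the square are $T$-algebra homomorphisms out of the free algebra $TG$, so by the universal property it suffices to compare them after precomposition with $\eta_G$, where both collapse to $\delta \circ g$ using \eqref{eq:sharpinv} and $g^\sharp \circ s = \id_X$. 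This is correct and arguably cleaner; the one thing worth making explicit is your claim that $\id \times (g^\sharp)^A$ is itself a $T$-algebra map because $O \times (-)^A$ lifts to $\Set^T$ along $\rho$ --- that is exactly where the naturality of $\rho$ that the paper chases by hand is hiding, so the two arguments have the same mathematical content, just packaged at different levels of abstraction.
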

\begin{proof}
	Let $g \colon G \to X$ witness $G$ being a set of generators for $X$ and let $s \colon X \to TG$ be a right inverse of $g^\sharp$.
	Recall that $\mathcal{X} = (X, i, \delta)$ and define
	\begin{align*}
		j &
			= 1 \xrightarrow{i} X \xrightarrow{s} TG \\
		\gamma &
			= G \xrightarrow{g} X \xrightarrow{\delta} O \times X^A \xrightarrow{\id \times s^A} O \times (TG)^A
	\end{align*}
	Then $(TG, j, \gamma^\sharp)$ is an automaton.
	We will show that $g^\sharp \colon TG \to X$ is an automaton homomorphism.
	We have $g^\sharp \circ j = g^\sharp \circ s \circ i = i$, and, writing $F$ for the functor $O \times (-)^A$ and $\chi$ for the $T$-algebra structure on $X$,
	\[
		\begin{tikzcd}[column sep=.7cm,row sep=1.0cm]
			TG \ar{r}{Tg} \ar{dd}[swap]{g^\sharp} &
				TX \ar{r}{T\delta} \ar[bend left=10]{ddl}{\chi} \ar[phantom,bend right,pos=.3]{rrrrdd}{\smallcircled{1}} &
				TFX \ar{r}[pos=.4]{TFs} \ar{rd}[swap]{\rho} &
				TFTG \ar{r}{\rho} \ar[phantom,bend right,pos=.4]{d}{\smallcircled{2}} &
				FT^2G \ar{r}{F\mu} \ar{d}{FTg^\sharp} \ar[phantom,bend left=14,pos=.45]{rdd}{\smallcircled{3}} &
				FTG \ar{dd}{Fg^\sharp} \\
			&
				&
				&
				FTX \ar{r}{\id} \ar{ru}{FTs} &
				FTX \ar{rd}{F\chi} \\
			X \ar{rrrrr}{\delta} &
				&
				&
				&
				&
				FX
		\end{tikzcd}
	\]
	commutes.
	Here $\smallcircled{1}$ commutes because $\delta$ is a $T$-algebra homomorphism, $\smallcircled{2}$ commutes by naturality of the distributive law $\rho$, and $\smallcircled{3}$ commutes because $g^\sharp$ is a $T$-algebra homomorphism.
	The triangle on the left unfolds the definition of $g^\sharp$, and the remaining triangle commutes by $s$ being right inverse to $g^\sharp$.
	Note that the composition in the top row of the diagram is $\gamma^\sharp$.
	We conclude that $g^\sharp$ is an automaton homomorphism, which using the finality in Definition~\ref{def:language} implies that $(TG, j, \gamma^\sharp)$ accepts the same language as $\mathcal{X}$.
\end{proof}

The state space $TG$ of this free representation can be extremely large.
Fortunately, the fact that $TG$ is a free algebra allows for a much more succinct version of this automaton.

\begin{definition}[$T$-succinct automaton]\label{def:succinct}
	Given an automaton of the form $(TX, i, \delta)$, where $TX$ is the free $T$-algebra on $X$, the corresponding \emph{$T$-succinct automaton} is the triple $(X, i, \delta \circ \eta)$.
	The language accepted by the $T$-succinct automaton is the language $l \circ i$ accepted by $(TX, i, \delta)$:
	\[
		\begin{tikzcd}[column sep=.9cm,row sep=.4cm]
			1 \ar{dr}{i} \\
			X \ar{r}{\eta} \ar{dd}[swap]{\delta \circ \eta} &
				TX \ar{ddl}{\delta} \ar[dashed]{r}{l} &
				O^{A^*} \ar{dd}{\omega} \\
			\\
			O \times (TX)^A \ar[dashed]{rr}{\id \times l^A} &
				&
				O \times (O^{A^*})^A
		\end{tikzcd}
	\]
\end{definition}

The goal of our construction is to build a $T$-succinct automaton from a set of generators that is \emph{minimal} in a way that we will define now.
In what follows below we use the following piece of notation: if $U$ and $V$ are sets such that $U \subseteq V$, then we write $\iota^U_V$ for the inclusion map $U \to V$.

\begin{definition}[Minimal generators]\label{def:mingen}
	Given a $T$-algebra $X$ and a set of generators $G$ for $X$ witnessed by $g \colon G \to X$, we say that $r \in G$ is \emph{redundant} if there exists a $U \in T(G \setminus \{r\})$ satisfying $(g \circ \iota^{G \setminus \{r\}}_G)^\sharp(U) = g(r)$; all other elements are said to be \emph{isolated}~\cite{arbib1975_}\footnote{%
		Arbib and Manes~\cite{arbib1975_} define isolated elements only for the full set $X$ rather than relative to a set of generators for $X$.
		Our refinement plays an important role in finding a minimal set of generators.
	}.
	We call $G$ a \emph{minimal set of generators for $X$} if $G$ contains no redundant elements.
\end{definition}

A minimal set of generators is not necessarily minimal in size.
However, under certain conditions this is the case.
The following result was mentioned but not proved by Arbib and Manes~\cite{arbib1975_}, who showed that its conditions are satisfied for any finitely generated $\Ps$-algebra.
We note that these conditions do not apply (in general) to any of the further examples in Section~\ref{sec:examples}.

\begin{proposition}\label{prop:geniso}
	If a $T$-algebra $X$ is generated by the isolated elements $I$ of the set of generators $X$ (Lemma~\ref{lem:genself}) with their inclusion map $\iota^I_X$ and $I$ is finite, then there is no set of generators for $X$ smaller than $I$, and every minimal set of generators for $X$ has the same size as $I$.
\end{proposition}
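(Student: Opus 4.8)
The plan is to reduce both assertions to a single key lemma: \emph{for every set of generators $G$ of $X$ witnessed by $g\colon G\to X$, we have $I\subseteq \img(g)$}. To prove this, I would factor $g$ as $G\xrightarrow{g'}W\hookrightarrow X$ with $W=\img(g)$ and $g'$ surjective. Since surjections in $\Set$ split and every endofunctor of $\Set$ preserves split epis, $Tg'$ is surjective, so by \eqref{eq:sharpcomp} the free extension $(\iota^W_X)^\sharp\colon TW\to X$ satisfies $(\iota^W_X)^\sharp\circ Tg'=g^\sharp$ and therefore has the same image as $g^\sharp$; as $g^\sharp$ is (split) epi, $(\iota^W_X)^\sharp$ is surjective onto $X$. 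Now suppose some isolated $x\in I$ were not in $W$. Then $W\subseteq X\setminus\{x\}$, and decomposing $\iota^W_X=\iota^{X\setminus\{x\}}_X\circ\iota^{W}_{X\setminus\{x\}}$ and applying \eqref{eq:sharpcomp} again would exhibit $x$ as $(\iota^{X\setminus\{x\}}_X)^\sharp$ applied to some element of $T(X\setminus\{x\})$, contradicting that $x$ is isolated for the generating set $(X,\id_X)$ of Lemma~\ref{lem:genself}. Hence $x\in W$. Note this lemma uses neither finiteness of $I$ nor the hypothesis that $I$ generates $X$.

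Given the lemma, the first claim is immediate: any generating set $G$ satisfies $|I|\le|\img(g)|\le|G|$, and since $I$ itself generates $X$ by hypothesis, $|I|$ is attained and is the minimum. For the second claim, let $(G,g)$ be a minimal set of generators. First I would check $g$ is injective: if $g(r_1)=g(r_2)$ with $r_1\ne r_2$, then $U=\eta(r_2)\in T(G\setminus\{r_1\})$ satisfies $(g\circ\iota^{G\setminus\{r_1\}}_G)^\sharp(U)=g(r_1)$ by \eqref{eq:sharpinv}, so $r_1$ is redundant, contradicting minimality. Then I would show $\img(g)=I$. The inclusion $I\subseteq\img(g)$ is the lemma; for the converse, suppose $w\in\img(g)\setminus I$ and set $r=g^{-1}(w)$. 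This is the only point where the hypothesis is used: since $I$ generates $X$ we may write $w=(\iota^I_X)^\sharp(V)$ for some $V\in TI$, and since $w\notin I$ while $I\subseteq\img(g)$ we get $I\subseteq\img(g)\setminus\{w\}=\img(g\circ\iota^{G\setminus\{r\}}_G)$ (using injectivity of $g$). Corestricting $g\circ\iota^{G\setminus\{r\}}_G$ to a surjection onto $\img(g)\setminus\{w\}$ and using \eqref{eq:sharpcomp} as before shows its free extension has image containing that of $(\iota^I_X)^\sharp$, in particular containing $w$; hence $r$ is redundant, a contradiction. Thus $\img(g)=I$, and injectivity of $g$ gives $|G|=|I|$.

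The hardest part will be the key lemma, and its reuse in the $\img(g)\subseteq I$ step: both hinge on composing inclusion maps between subsets of $X$ and pushing them through the identity $g^\sharp\circ Th=(g\circ h)^\sharp$, together with the observation that $T$ preserves surjectivity of corestrictions, so that $\img(g^\sharp)=\img\bigl((\iota^{\img(g)}_X)^\sharp\bigr)$. Finiteness of $I$ plays no role beyond making the cardinality comparisons elementary, and the whole argument goes through for an arbitrary (not necessarily finitary) monad on $\Set$.
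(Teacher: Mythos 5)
Your proof is correct, and its first half---the key lemma that $I\subseteq\img(g)$ for every generating set $(G,g)$---is in substance exactly the paper's argument: the paper likewise corestricts $g$ so as to land in $X\setminus\{i\}$ for an isolated $i\notin\img(g)$ and uses \eqref{eq:sharpcomp} to contradict isolatedness, so your detour through $W=\img(g)$ is only cosmetic. The second half is decomposed differently. The paper packages $I\subseteq\img(g)$ as a map $h\colon I\to G$ with $g\circ h=\iota^I_X$ and shows directly that any $v\in G$ outside $\img(h)$ is redundant: taking $U\in TI$ with $(\iota^I_X)^\sharp(U)=g(v)$, the element $T(h)(U)$ lives in $T(G\setminus\{v\})$ because $h$ factors through $G\setminus\{v\}$, and \eqref{eq:sharpcomp} gives $g^\sharp(T(h)(U))=g(v)$; since $h$ is automatically injective ($g\circ h$ is an inclusion), this makes $h$ a bijection and finishes the proof without ever needing $g$ itself to be injective. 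You instead first prove that $g$ is injective on a minimal generating set (the $\eta(r_2)$ trick, which the paper deploys only later, in Section~\ref{sec:main}) and then establish $\img(g)=I$ by a corestriction-and-image computation. Both routes are valid; the paper's is a little more economical, while yours yields the marginally stronger conclusion that a minimal set of generators is carried bijectively onto $I$ itself, and your remark that finiteness of $I$ is not really needed is accurate---the paper's own proof also only uses it for the cardinality comparison.
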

\begin{proof}
	Let $G \xrightarrow{g} X$ be a set of generators for $X$, and assume towards a contradiction that $G$ is smaller than $I$.
	Then there must be an $i \in I$ such that there is no $v \in G$ satisfying $g(v) = i$.
	Let $g' \colon G \to X \setminus \{i\}$ be pointwise equal to $g$.
	Because $g^\sharp$ is a split epi and thus surjective, there is a $U \in TG$ such that $g^\sharp(U) = i$.
	Note that by \eqref{eq:sharpcomp},
	\[
		g^\sharp = (\iota^{X \setminus \{i\}}_X \circ g')^\sharp = TG \xrightarrow{T(g')} T(X \setminus \{i\}) \xrightarrow{(\iota^{X \setminus \{i\}}_X)^\sharp} X.
	\]
	Then $(\id \circ \iota^{X \setminus \{i\}}_X)^\sharp(T(g')(U)) = i$, contradicting the fact that $i$ is isolated in the full set of generators $X$.
	Thus, $G$ cannot be smaller than $I$.
	In fact, we see that for every $i \in I$ there is a $v \in G$ satisfying $g(v) = i$.
	This yields a function $h \colon I \to G$ such that $g \circ h = \iota^I_X$.

	Suppose $G$ is a minimal set of generators, and take any $v \in G$ not in the image of $h$.
	We will show that $v$ is redundant in $G$.
	Since $I$ constitutes a set of generators for $X$, there exists a $U \in TI$ such that $(\iota^I_X)^\sharp(U) = g(v)$.
	Then
	\[
		g^\sharp(T(h)(U)) = (g \circ h)^\sharp(U) = (\iota^I_X)^\sharp(U) = g(v).
	\]
	It follows that $v$ is redundant in $G$, which contradicts $G$ being minimal.
	Therefore, $h$ is surjective and $G$ has the same size as $I$.
\end{proof}

\section{$T$-minimization}\label{sec:minimization}

In this section we describe a construction to compute a ``minimal'' succinct $T$-automaton equivalent to a given $T$-automaton. This crucially relies on a procedure that finds a minimal set of generators by removing redundant elements one by one. All that needs to be done for specific monads is determining whether an element is redundant.

\begin{proposition}[Generator reduction]\label{prop:reduction}
	Given a $T$-algebra $X$ and a set of generators $G$ for $X$, if $r \in G$ is redundant, then $G \setminus \{r\}$ is a set of generators for $X$.
\end{proposition}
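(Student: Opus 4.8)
The plan is to produce, from the given witness $g \colon G \to X$ of $G$ being a set of generators, an explicit new witness $g' \colon G \setminus \{r\} \to X$ together with an explicit right inverse of $(g')^\sharp$ in $\Set$. The natural choice is $g' = g \circ \iota^{G \setminus \{r\}}_G$, i.e.\ the restriction of $g$ to $G \setminus \{r\}$. By redundancy of $r$, there is some $U_r \in T(G \setminus \{r\})$ with $(g')^\sharp(U_r) = g(r)$. The key observation is that this allows us to define a ``forgetting $r$'' map $k \colon TG \to T(G \setminus \{r\})$ that substitutes $U_r$ for every occurrence of $r$; concretely, $k = \mu_{G \setminus \{r\}} \circ T(f)$ where $f \colon G \to T(G \setminus \{r\})$ sends $r$ to $U_r$ and sends every other $x$ to $\eta_{G \setminus \{r\}}(x)$ (equivalently $f = [\eta_{G\setminus\{r\}}, \lambda \_.\,U_r]$ under $G \cong (G \setminus \{r\}) + \{r\}$, using that $\Set$ has coproducts).

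Next I would check the crucial compatibility: $(g')^\sharp \circ k = g^\sharp$ as maps $TG \to X$. This is the semantic counterpart of ``substituting $U_r$ for $r$ does not change the value'', and it follows from the monad/free-extension identities in the preliminaries, in particular \eqref{eq:sharpcomp} and the characterization of $(-)^\sharp$ via $\mu$ and $T$. Unwinding: $(g')^\sharp \circ k = (g')^\sharp \circ \mu_{G\setminus\{r\}} \circ Tf$; since $(g')^\sharp$ is a $T$-algebra homomorphism $T(G\setminus\{r\}) \to X$ it commutes with the multiplication appropriately, so this equals $(g')^\sharp \circ T((g')^\sharp \circ ?)$-type expression that collapses to $((g')^\sharp \circ f)^\sharp$ by \eqref{eq:sharpcomp}; and $(g')^\sharp \circ f \colon G \to X$ sends $r \mapsto (g')^\sharp(U_r) = g(r)$ and $x \mapsto (g')^\sharp(\eta(x)) = g'(x) = g(x)$ for $x \neq r$, hence equals $g$. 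Therefore $((g')^\sharp \circ f)^\sharp = g^\sharp$, giving the desired identity. (This is the step I expect to require the most care in lining up the abstract identities correctly, though it is ultimately routine.)

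Finally I would assemble the right inverse. Let $s \colon X \to TG$ be the given right inverse of $g^\sharp$, so $g^\sharp \circ s = \id_X$. Set $s' = k \circ s \colon X \to T(G \setminus \{r\})$. Then
\[
	(g')^\sharp \circ s' = (g')^\sharp \circ k \circ s = g^\sharp \circ s = \id_X,
\]
so $s'$ is a right inverse of $(g')^\sharp$ in $\Set$, witnessing that $G \setminus \{r\}$, via $g'$, is a set of generators for $X$. The only genuine obstacle is verifying $(g')^\sharp \circ k = g^\sharp$ cleanly; everything else is formal manipulation of split epis and the free-extension calculus. One should also note that $G \cong (G \setminus \{r\}) + \{r\}$ and that $f$ above is well defined regardless of whether $G$ is finite, so the argument works for arbitrary sets of generators.
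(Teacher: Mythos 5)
Your proposal is correct and follows essentially the same route as the paper: your $f$ and $k = \mu \circ Tf$ are exactly the paper's $e$ and $e^\sharp$, your identity $(g')^\sharp \circ k = g^\sharp$ is verified by the same chain (homomorphism property of $(g')^\sharp$, functoriality, and $(g')^\sharp \circ f = g$ pointwise), and the right inverse $k \circ s$ is the one the paper constructs.
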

\begin{proof}
	Let $G' = G \setminus \{r\}$ and let $g' \colon G' \to X$ be the restriction of $g \colon G \to X$ to $G'$.
	Since $r$ is redundant, there is a $U \in T(G')$ such that $g'^\sharp(U) = g(r)$.
	Define $e \colon G \to T(G')$ by
	\[
		e(x) = \begin{cases}
			U &
				\text{if $x = r$} \\
			\eta(x) &
				\text{if $x \ne r$.}
		\end{cases}
	\]
	We will show that $g'^\sharp \circ e = g$.
	Consider any $x \in G$.
	If $x = r$, then
	\[
		g'^\sharp(e(x)) = g'^\sharp(e(r)) = g'^\sharp(U) = g(r) = g(x).
	\]
	If $x \ne r$, then, using \eqref{eq:sharpinv},
	\[
		g'^\sharp(e(x)) = g'^\sharp(\eta(x)) = g'^{\sharp\dagger} = g'(x) = g(x).
	\]

	Let $\chi \colon TX \to X$ be the algebra structure on $X$ and take any right inverse $s \colon X \to TG$ of $g^\sharp$.
	Then
	\begin{align*}
		g'^\sharp \circ e^\sharp \circ s &
			= g'^\sharp \circ \mu \circ Te \circ s &
			&
			\text{(definition of $e^\sharp$)} \\
		&
			= \chi \circ T(g'^\sharp) \circ Te \circ s &
			&
			\text{($g'^\sharp$ is a $T$-algebra homomorphism)} \\
		&
			= \chi \circ T(g'^\sharp \circ e) \circ s &
			&
			\text{(functoriality of $T$)} \\
		&
			= \chi \circ Tg \circ s &
			&
			\text{($g'^\sharp \circ e = g$ as shown above)} \\
		&
			= g^\sharp \circ s &
			&
			\text{(definition of $g^\sharp$)} \\
		&
			= \id_X &
			&
			\text{($s$ is right inverse to $g^\sharp$)}.
	\end{align*}
	We thus see that $e^\sharp \circ s$ is right inverse to $g'^\sharp$, which means that $G'$ is a set of generators for $X$.
\end{proof}

If we determine that an element is isolated, there is no need to check this again later when the set of generators has been reduced.
This is thanks to the following result.

\begin{proposition}
	If $G \xrightarrow{g} X$ and $G' \xrightarrow{g'} X$ are sets of generators for a $T$-algebra $X$ such that $G' \subseteq G$ and $g'$ is the restriction of $g$ to the domain $G'$, then whenever an element $r \in G'$ is isolated in $G$, it is also isolated in $G'$.
\end{proposition}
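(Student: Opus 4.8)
The plan is to prove the contrapositive: if $r \in G'$ is redundant in $G'$, then $r$ is redundant in $G$. Since by hypothesis $r$ is isolated in $G$, this will be a contradiction, forcing $r$ to be isolated in $G'$. The whole argument is a transport of a redundancy witness along the inclusion $G' \setminus \{r\} \hookrightarrow G \setminus \{r\}$, using \eqref{eq:sharpcomp} and the fact that $g'$ is literally the restriction of $g$.

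First I would unfold Definition~\ref{def:mingen} for $G'$: redundancy of $r$ in $G'$ gives an element $U \in T(G' \setminus \{r\})$ with $(g' \circ \iota^{G' \setminus \{r\}}_{G'})^\sharp(U) = g'(r)$. Since $G' \subseteq G$, we have $G' \setminus \{r\} \subseteq G \setminus \{r\}$, so I can set $V = T(\iota^{G' \setminus \{r\}}_{G \setminus \{r\}})(U) \in T(G \setminus \{r\})$, which is a candidate witness of redundancy of $r$ in $G$.

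The key computation is then $(g \circ \iota^{G \setminus \{r\}}_G)^\sharp(V)$. By \eqref{eq:sharpcomp}, precomposing a free extension with $T$ applied to a map equals the free extension of the composite, so
\[
	(g \circ \iota^{G \setminus \{r\}}_G)^\sharp \circ T(\iota^{G' \setminus \{r\}}_{G \setminus \{r\}}) = (g \circ \iota^{G \setminus \{r\}}_G \circ \iota^{G' \setminus \{r\}}_{G \setminus \{r\}})^\sharp = (g \circ \iota^{G' \setminus \{r\}}_G)^\sharp,
\]
where the last step uses that inclusions compose to inclusions. Because $g'$ is the restriction of $g$ to $G'$, the map $g \circ \iota^{G' \setminus \{r\}}_G$ coincides with $g' \circ \iota^{G' \setminus \{r\}}_{G'}$ (both are $g$ restricted to $G' \setminus \{r\}$). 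Hence
\[
	(g \circ \iota^{G \setminus \{r\}}_G)^\sharp(V) = (g' \circ \iota^{G' \setminus \{r\}}_{G'})^\sharp(U) = g'(r) = g(r),
\]
so $r$ is redundant in $G$, the desired contradiction.

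I do not expect any real obstacle here: the proof is bookkeeping of inclusion maps together with a single appeal to \eqref{eq:sharpcomp}. The only points needing a moment's care are verifying that the inclusions compose as claimed (e.g.\ $\iota^{G \setminus \{r\}}_G \circ \iota^{G' \setminus \{r\}}_{G \setminus \{r\}} = \iota^{G' \setminus \{r\}}_G$) and that the transported element $V$ really lands in $T(G \setminus \{r\})$ rather than merely in $TG$ — both immediate from $G' \subseteq G$.
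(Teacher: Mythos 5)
Your proof is correct and is essentially identical to the paper's: both argue the contrapositive (redundant in $G'$ implies redundant in $G$) by pushing the witness $U$ forward along $T(\iota^{G' \setminus \{r\}}_{G \setminus \{r\}})$ and applying \eqref{eq:sharpcomp} together with $g' = g \circ \iota^{G'}_G$. No gaps.
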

\begin{proof}
	We will show that redundant elements in $G'$ are also redundant in $G$.
	If $r \in G'$ is isolated in $G'$, then there exists $U \in T(G' \setminus \{r\})$ such that $(g' \circ \iota^{G' \setminus \{r\}}_{G'})^\sharp(U) = g'(r)$.
	Note that $g' = g \circ \iota^{G'}_G$.
	We have
	\begin{align*}
		(g \circ \iota^{G \setminus \{r\}}_G)^\sharp(T(\iota^{G' \setminus \{r\}}_{G \setminus \{r\}})(U)) &
			= (g \circ \iota^{G \setminus \{r\}}_G \circ \iota^{G' \setminus \{r\}}_{G \setminus \{r\}})^\sharp(U) &
			&
			\text{\eqref{eq:sharpcomp}} \\
		&
			= (g \circ \iota^{G'}_G \circ \iota^{G' \setminus \{r\}}_{G \setminus \{r\}})^\sharp(U) \\
		&
			= (g' \circ \iota^{G' \setminus \{r\}}_{G \setminus \{r\}})^\sharp(U) \\
		&
			= g'(r) \\
		&
			= g(r),
	\end{align*}
	so $r$ is redundant in $G$.
\end{proof}

Finally, taking the observable quotient $M$ of a $T$-automaton $Q$ preserves generators, considering that the $T$-automaton homomorphism $m \colon Q \to M$ is a split epi in $\Set$ under the axiom of choice.

\begin{proposition}
	If $Q$ and $M$ are $T$-algebras, $m \colon Q \to M$ is a $T$-algebra homomorphism that is a split epi in $\Set$, and $G \xrightarrow{g} Q$ is a set of generators for $Q$, then $G \xrightarrow{g} Q \xrightarrow{m} M$ is a set of generators for $M$.
\label{prop:gen-epi}
\end{proposition}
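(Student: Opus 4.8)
The plan is to prove this directly: I want to show that $(m \circ g)^\sharp \colon TG \to M$ is a split epi in $\Set$, and I will do so by assembling a right inverse out of the two right inverses already available. By hypothesis there is a map $s \colon Q \to TG$ in $\Set$ with $g^\sharp \circ s = \id_Q$ (this is the witness that $G$ is a set of generators for $Q$), and since $m$ is a split epi in $\Set$ there is a map $t \colon M \to Q$ with $m \circ t = \id_M$. My candidate right inverse for $(m \circ g)^\sharp$ is then $s \circ t \colon M \to TG$.

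The one small identity that needs checking along the way is $(m \circ g)^\sharp = m \circ g^\sharp$. Writing $\chi_Q \colon TQ \to Q$ and $\chi_M \colon TM \to M$ for the algebra structures, we have $(m \circ g)^\sharp = \chi_M \circ T(m \circ g) = \chi_M \circ Tm \circ Tg$ by definition of the free extension and functoriality of $T$; and since $m$ is a $T$-algebra homomorphism, $\chi_M \circ Tm = m \circ \chi_Q$, so $(m \circ g)^\sharp = m \circ \chi_Q \circ Tg = m \circ g^\sharp$. (This is just the post-composition counterpart of \eqref{eq:sharpcomp}, specialised to the algebra map $m$.) With this in hand the verification is immediate: $(m \circ g)^\sharp \circ (s \circ t) = m \circ g^\sharp \circ s \circ t = m \circ \id_Q \circ t = m \circ t = \id_M$, so $s \circ t$ is a right inverse of $(m \circ g)^\sharp$ in $\Set$, which is precisely the statement that $G \xrightarrow{m \circ g} M$ is a set of generators for $M$.

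I do not expect a real obstacle here; the only point to be careful about is that $s$ and $t$ live in $\Set$, not in $\Set^T$, so no algebra-homomorphism property of $s$ or $t$ is used — the homomorphism property of $m$ enters only to rewrite $(m \circ g)^\sharp$ as $m \circ g^\sharp$. If desired, the whole argument can be stated even more slickly: split epis compose, so the composite of the split epi $g^\sharp$ with the split epi $m$ is again a split epi, and that composite is $m \circ g^\sharp = (m \circ g)^\sharp$.
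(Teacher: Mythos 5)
Your proof is correct and follows essentially the same route as the paper: both establish the key identity $(m \circ g)^\sharp = m \circ g^\sharp$ via the homomorphism property of $m$ and then conclude by the fact that split epis compose (which you additionally verify by hand with the explicit right inverse $s \circ t$).
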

\begin{proof}
	Let $a \colon TQ \to Q$ be the $T$-algebra structure on $Q$ and $b \colon TM \to M$ the one on $M$.
	We have
	\[
		(m \circ g)^\sharp = b \circ T(m \circ g) = b \circ T(m) \circ T(g) = m \circ a \circ Tg = m \circ g^\sharp
	\]
	using that $m$ is a $T$-algebra homomorphism.
	It is well known that compositions of split epis are split epis themselves, so $G$ is a set of generators for $M$.
\end{proof}

Now we are ready to define the construction that builds a $T$-succinct automaton accepting the same language as a $T$-automaton.

\begin{construction}[$T$-minimization]\label{cons:tmin}
	Starting from a $T$-automaton $(X, i, \delta)$, where $X$ has a finite set of generators, we execute the following steps.
	\begin{enumerate}
		\item
			Take the observable quotient $(M, i_0, \delta_0)$ of $(X, i, \delta)$.
		\item
			Compute a minimal set of generators $G$ of $M$ by starting from the full set $M$ and applying Proposition~\ref{prop:reduction}.
		\item
			Compute and return the corresponding $T$-succinct automaton as defined in Definition~\ref{def:succinct} via Proposition~\ref{prop:free-rep}.
	\end{enumerate}
\end{construction}

Generic minimization algorithms have been proposed in the literature. For example, Ad{\'a}mek et al.\ give a general procedure to compute the observable quotient~\cite{adamek2012}, and K\"onig and K\"upper provide a generic partition refinement algorithm for coalgebras, with a focus on instantiations to weighted automata~\cite{DBLP:conf/ifipTCS/KonigK14}. None of these works provide any complexity analysis. Recently, Dorsch et al.~\cite{DBLP:journals/corr/DorschMSW17} have presented a coalgebraic Paige--Tarjan algorithm and provided a complexity analysis for a class of functors in categories with image-factorization. These restrictions match well the ones we make, and therefore their algorithm could be applied in our first step.
Given a finite set of generators $G$, the loop in the second step involves considering each element of $G$ and checking whether it is redundant.
If so, we will remove the element from $G$ and continue the loop.
The redundancy check is the only part for which computability needs to be determined in each specific setting.

\begin{example} [Join-semilattices]
	We give an example of the construction in the category JSL of complete join-semilattices.
	We start from a minimal $\Ps$-automaton (in JSL) that has 4 states and is depicted below on the left.
	The dashed blue lines indicate the JSL structure.

	\begin{tabular}{c@{\hspace{3cm}}c}
		\begin{tikzpicture}[initial text={},->,>=stealth',shorten >=1pt,auto,node distance=8ex,semithick]
		\node[initial,state] (0) {$x$};
		\node (0a) [right of =0] {};
		\node[state, accepting] (1) [right of = 0a] {$y$};
		\node[state, accepting] (2) [above of = 0a] {$z$};
		\node[state] (3) [below of = 0a] {$\bot$};

		\path
		(0) edge [bend left] node {$b$} (2)
		(1) edge [swap, bend right] node {$b$} (2)
		(2) edge[loop above] node {$a,b$} ()
		(1) edge [bend left] node {$a$} (0)
		(0) edge [bend left] node {$a$} (1)
		(3) edge [-,dashed,blue]  (0)
		(3) edge [-,dashed,blue]  (1)
		(1) edge [-,dashed,blue]  (2)
		(0) edge [-,dashed,blue]  (2)
		(3) edge[loop right] node {$a,b$} ();
		\end{tikzpicture} &
			\begin{tikzpicture}[initial text={},->,>=stealth',shorten >=1pt,auto,node distance=11ex,semithick]
			\node[initial,state] (0) {$x$};
			\node[state,accepting] (1) [right of = 0] {$y$};

			\path
			(0) edge [bend left] node {$a,b$} (1)
			(1) edge [bend left] node {$a,b$} (0)
			(0) edge[loop below] node {$b$} (0)
			(1) edge[loop below] node {$b$} (0);
			\end{tikzpicture}
	\end{tabular}

	Since the automaton is minimal, it is isomorphic to its observable quotient.
	We start from the full set of generators $\{\bot, x, y, z\}$.
	Note that $z$ is the union of $x$ and $y$, so we can eliminate it.
	Additionally, $\bot$ is the empty union and can be removed as well.
	Both $x$ and $y$ are isolated elements and form the unique minimal set of generators $G = \{x, y\}$ (see the remark above Proposition~\ref{prop:geniso}).
	These are exactly the join-irreducibles of $M$.
	They induce by Proposition~\ref{prop:free-rep} an automaton $(TG,j,\gamma)$, where $\gamma$ is the same transition structure as the above automaton, but with $\{x,y\}$ substituted for $z$; the initial state is the singleton set $\{x\}$.
	The $\Ps$-succinct automaton corresponding to this minimal set of generators (Definition~\ref{def:succinct}) is the non-deterministic automaton shown on the right.

	Note that the definition of the automaton defined in Proposition~\ref{prop:free-rep} depends on the right inverse chosen for the extension of the generator map.
	When the original JSL automaton is reachable (every state is reached by some set of words, where a set of words reaches the join of the states reached by the words it contains), this right inverse may be chosen in such a way to recover the canonical \emph{residual finite state automaton} (RFSA), as well as the simplified canonical RFSA, both due to Denis et al.~\cite{denis2002}.
	Details are given in~\cite{vanHeerdtSS18}.
	See \cite{MyersAMU15} for conditions under which the canonical RFSA, referred to as the jiromaton, is a state-minimal NFA.
\end{example}

\section{Main construction}\label{sec:main}

In this section we present the main construction of the paper.
Given a \emph{finite} automaton $(X,i,\delta)$ in $\Set$, i.e., an automaton where $X$ is finite, this construction builds an equivalent $T$-succinct automaton.

The first step is taking the reachable part $R$ of $X$ and converting this automaton into a $T$-automaton recognising the same language.
\begin{proposition}\label{prop:undet}
	Let $(TR, \hat{i}, \hat{\delta})$ be the $T$-automaton defined as follows:
	\[
		\begin{tikzcd}
			1 \ar{d}{i} \ar[dashed]{dr}{\hat{i} = \eta_R \circ i} \\
			R \ar{r}{\eta_R} \ar{d}{\delta} &
				TR \ar[dashed]{d}{\hat{\delta} = ((\id \times \eta_R^A) \circ \delta)^\sharp} \\
			O \times R^A \ar{r}{\id \times \eta_R^A} &
				O \times T(R)^A
		\end{tikzcd}
	\]
	Then $(R,i,\delta)$ and $(TR, \hat{i}, \hat{\delta})$ accept the same language.
\end{proposition}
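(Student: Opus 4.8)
The plan is to exhibit $\eta_R \colon R \to TR$ as a homomorphism of $(O \times (-)^A)$-coalgebras in $\Set$ from $(R, \delta)$ to $(TR, \hat\delta)$ that additionally preserves the initial state, and then to invoke finality. First I would check that $\hat\delta \circ \eta_R = (\id \times \eta_R^A) \circ \delta$: this is immediate from \eqref{eq:sharpinv}, since $\hat\delta = ((\id \times \eta_R^A) \circ \delta)^\sharp$ and hence $\hat\delta \circ \eta_R = ((\id \times \eta_R^A) \circ \delta)^{\sharp\dagger} = (\id \times \eta_R^A) \circ \delta$. Together with the defining equation $\hat i = \eta_R \circ i$, this says exactly that $\eta_R$ is a morphism of pointed coalgebras for $O \times (-)^A$ over $\Set$.

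Next I would observe that the final $(O \times (-)^A)$-coalgebra map $l \colon TR \to O^{A^*}$ used in Definition~\ref{def:language} is, after forgetting the $T$-algebra structure, a coalgebra homomorphism in $\Set$ into $(O^{A^*}, \omega)$; since $(O^{A^*}, \omega)$ is the final coalgebra for $O \times (-)^A$ in $\Set$ as well, $l$ coincides as a function with the unique $\Set$-coalgebra homomorphism $TR \to O^{A^*}$. Writing $l_R \colon R \to O^{A^*}$ for the final coalgebra map of $(R, \delta)$ in $\Set$ — which by definition sends a state to the language it accepts, so that the language of $(R, i, \delta)$ is $l_R \circ i$ — finality in $\Set$ applied to the homomorphism $\eta_R$ gives $l \circ \eta_R = l_R$. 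Therefore the language of $(TR, \hat i, \hat\delta)$ is $l \circ \hat i = l \circ \eta_R \circ i = l_R \circ i$, which is precisely the language of $(R, i, \delta)$.

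Alternatively — and perhaps more in the spirit of the surrounding development — one can note that $(TR, \hat i, \hat\delta)$ is already of the free shape $(TX, i, \delta)$ required by Definition~\ref{def:succinct}, with its initial state factoring through $\eta_R$; the corresponding $T$-succinct automaton is then $(R, i, \hat\delta \circ \eta_R)$, and $\hat\delta \circ \eta_R = (\id \times \eta_R^A) \circ \delta$ by the computation above. Definition~\ref{def:succinct} tells us this succinct automaton has the same language as $(TR, \hat i, \hat\delta)$, so it remains only to see that running $(R, i, \delta)$ and running $(R, i, (\id \times \eta_R^A) \circ \delta)$ produce the same $O^{A^*}$-behaviour — which again is exactly the statement that $\eta_R$ is a $\Set$-coalgebra homomorphism.

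The only genuinely delicate point is the bookkeeping in the middle paragraph: the language of a $T$-automaton is defined via the final coalgebra in $\Set^T$, whereas the language of the plain automaton $(R, i, \delta)$ lives in $\Set$, so one must be careful to identify the two through the forgetful functor (using that the functor $O \times (-)^A$ on $\Set^T$ is a lifting, so the forgetful functor commutes with it) before comparing them. Everything else is a direct application of \eqref{eq:sharpinv} and the universal property of the final coalgebra; in particular, no further structure of $R$ — not even its reachability — is needed for this proposition.
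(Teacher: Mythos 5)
Your proof is correct and follows essentially the same route as the paper: the paper's own (much terser) argument likewise observes that the commuting square exhibits $\eta_R$ as a coalgebra homomorphism and therefore preserves the accepted language, applied in particular to $i(\star)$ and $\hat{i}(\star)$. Your additional care in verifying $\hat\delta \circ \eta_R = (\id \times \eta_R^A) \circ \delta$ via \eqref{eq:sharpinv} and in identifying the final coalgebra maps across the forgetful functor is sound but not a different method.
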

\begin{proof}
	The diagram above means that $\eta_R$ is a coalgebra homomorphism, and as such it preserves language.
	Explicitly: $x \in R$ accepts the same language as $\eta_R(x)$, which in particular holds for $i(\star)$ and $\hat{i}(\star)$.
\end{proof}
Now we can $T$-minimize $(TR, \hat{i}, \hat{\delta})$ (Construction~\ref{cons:tmin}), which yields an equivalent $T$-automaton.
Notice that, $R$ being finite, any quotient of $TR$ has a finite set of generators. This is a consequence of $R$ being a set of generators for $TR$ (Lemma~\ref{lem:genfree}) and of generators being preserved by quotients (Proposition~\ref{prop:gen-epi}). It follows that every step of the $T$-minimization construction terminates.

\begin{proposition}
	The $T$-succinct automaton defined above is at least as small as the minimal deterministic automaton equivalent to $X$.
\end{proposition}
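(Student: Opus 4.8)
The plan is to exhibit, inside the observable quotient that the construction works with, a set of generators no larger than the minimal automaton, and then to run the generator reduction of step~2 of Construction~\ref{cons:tmin} so that it passes through this set. Write $N$ for the minimal deterministic automaton equivalent to $X$, obtained (up to isomorphism) as the observable quotient of the reachable part $(R,i,\delta)$; by \eqref{eq:obs} it comes with a surjection $e\colon R\to N$ collapsing exactly the language-equivalent states. Let $(M,i_0,\delta_0)$ be the observable quotient of $(TR,\hat{i},\hat{\delta})$ computed in the first step of Construction~\ref{cons:tmin}, with underlying $T$-algebra quotient $p\colon TR\to M$ (a split epi in $\Set$, as noted before Proposition~\ref{prop:gen-epi}). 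The $T$-succinct automaton returned by the construction has state space a minimal set of generators $G$ of $M$, so it suffices to prove $|G|\le|N|$.

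The first and main step is to show that $N$ itself induces a set of generators of $M$ of size at most $|N|$. By Lemma~\ref{lem:genfree}, $R$ generates $TR$ via $\eta_R$, so by Proposition~\ref{prop:gen-epi} the map $p\circ\eta_R\colon R\to M$ witnesses $R$ as a set of generators of $M$. I would then argue that $p\circ\eta_R$ factors through $e$: since $\eta_R$ is a coalgebra homomorphism (Proposition~\ref{prop:undet}), the composite $l\circ\eta_R$ of the language map $l\colon TR\to O^{A^*}$ with $\eta_R$ is the language map of $R$, which by \eqref{eq:obs} identifies precisely the states identified by $e$; composing with the mono part $m_0$ of the factorisation $l=m_0\circ p$ of the language map of $TR$ and using that $m_0$ is injective shows that $p\circ\eta_R$ coequalises $e$, hence $p\circ\eta_R=n\circ e$ for a unique $n\colon N\to M$. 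Then $n^\sharp\circ Te=(p\circ\eta_R)^\sharp=p$ by \eqref{eq:sharpcomp}, and since $p$ is a split epi and $Te$ is a split epi (because $e$ is a surjection, hence a split epi in $\Set$, and functors preserve split epis), the map $n^\sharp$ is a split epi too; thus $n$ witnesses $N$ as a set of generators of $M$, with $|n(N)|\le|N|$.

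The second step is to reduce the full set of generators $M$ of $M$ (Lemma~\ref{lem:genself}) down to a subset of $n(N)$. For any $m\in M\setminus n(N)$, the fact that $n(N)$ generates $M$ puts $m$ in the image of $(\iota^{n(N)}_M)^\sharp$, and since $n(N)\subseteq M\setminus\{m\}$ this puts $m$ in the image of $(\iota^{M\setminus\{m\}}_M)^\sharp$ by \eqref{eq:sharpcomp}; that is, $m$ is redundant in $M$, and it stays redundant while only other elements outside $n(N)$ are discarded. So the reduction of step~2 can be carried out so as to delete all of $M\setminus n(N)$ first, reaching the set of generators $n(N)$, and then continue reducing inside it; by Proposition~\ref{prop:reduction} each step is legitimate, and the resulting minimal set of generators $G$ is contained in $n(N)$, whence $|G|\le|n(N)|\le|N|$. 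By Proposition~\ref{prop:free-rep} and Definition~\ref{def:succinct} the $T$-succinct automaton has exactly $|G|$ states, so it is at least as small as $N$, as claimed.

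I expect the main obstacle to be the factorisation argument in the second paragraph: the easy bound only gives that $R$ (which may be far larger than $N$) generates $M$, and one must chase the language maps through the two $(\text{epi},\text{mono})$-factorisations to see that $p\circ\eta_R$ descends along the Myhill--Nerode quotient $e$. A secondary point worth flagging in a remark is that the reduction of step~2 is order-dependent, and the bound is realised by the (always available) choice that clears $M\setminus n(N)$ before anything else; an arbitrary minimal set of generators of $M$ need not have size $\le|N|$, so Proposition~\ref{prop:geniso}, whose hypotheses fail in general, cannot be invoked here.
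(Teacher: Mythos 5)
Your argument is sound, but it follows a genuinely different route from the paper's, and the difference matters for what exactly gets proved. The paper does not start the generator reduction from the full algebra $M$: in Section~\ref{sec:main} the reduction is run from the image of $R$ in $M$ (using Lemma~\ref{lem:genfree} and Proposition~\ref{prop:gen-epi}), so the final minimal set $G$ is a subset of $R$. The bound then falls out directly: the language accepted by $r\in G$, viewed as a state of $R$, equals $(m\circ g)(r)$ for the mono $m\colon M\rightarrowtail O^{A^*}$ of \eqref{eq:obs}, and $m\circ g$ is injective --- $m$ is mono by construction, and if $g(r_1)=g(r_2)$ with $r_1\ne r_2$ then $\eta(r_2)\in T(G\setminus\{r_1\})$ witnesses $r_1$ as redundant, contradicting minimality. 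Hence the elements of $G$ accept pairwise distinct residual languages of $X$, of which there are exactly as many as states of the minimal DFA. This works for \emph{every} run of the reduction from $R$, with no need to steer the removal order. Your proof instead establishes the structurally appealing fact that the minimal DFA $N$ itself embeds as a generating set $n(N)$ of $M$ (via the factorisation of $p\circ\eta_R$ through the Myhill--Nerode quotient, which is correct), and then shows that a \emph{particular} run of the reduction --- one that clears $M\setminus n(N)$ first --- lands inside $n(N)$. As you yourself flag, this only bounds a chosen execution; an arbitrary minimal generating set of the full algebra $M$ is not obviously of size at most $|N|$, so your argument does not cover the construction as the paper actually runs it (nor all runs of the variant you consider). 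A secondary practical point: starting the reduction from all of $M$ rather than from $R$ is exactly what the paper avoids, since $M$ is a quotient of $TR$ and can be vastly larger than $R$. In short: your lemma that $N$ generates $M$ is a nice addition not present in the paper, but the paper's injectivity argument is both shorter and strictly stronger in that it is independent of the order of removal.
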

\begin{proof}
	The situation is summed up in the following commutative diagram:
	\[
		\begin{tikzcd}[column sep=.6cm,row sep=.7cm]
			G \ar[hook]{d} \ar[bend left=20]{drr}{g} \\
			R \ar{r}{\eta} &
				TR \ar{r}{e} &
				M \ar[tail]{r}{m} &
				O^{A^*}
		\end{tikzcd}
	\]
	Here $G$ is the final minimal set of generators for $M$ resulting from the construction.
	Commutativity follows from $G$ being a subset of the set of generators $R$.

	The minimal deterministic automaton equivalent to $X$ is obtained from $R$ by merging language-equivalent states.
	Recalling (\ref{eq:obs}) and the proof of Proposition~\ref{prop:undet}, we see that $e \circ \eta_R$ is a coalgebra homomorphism.
	Together with commutativity of the above diagram, this means that the language accepted by $r \in G$ (seen as a state of $R$) is given by $(m \circ g)(r)$.
	Since $G$ is a subset of $R$, to show that $G$ is at least as small as the minimal deterministic automaton, we only have to show that different states in $G$ accept different languages.
	That is, we will show that $m \circ g$ is injective.
	We know that $m$ is injective by definition; to see that $g$ is injective, consider $r_1, r_2 \in G$ such that $g(r_1) = g(r_2)$.
	Then $g(r_1) = g(r_2) = g^\sharp(\eta(r_2))$.
	Assuming $r_1 \ne r_2$ leads to the contradiction that $G$ is not a minimal set of generators because in this case $\eta(r_2) \in T(G \setminus \{r_1\})$.
\end{proof}

Computing the determinization $TR$ is an expensive operation that only terminates if $T$ preserves finite sets. One could devise an optimized
version of Construction~\ref{cons:tmin} in which the determinization is not computed completely in order to minimize it. Instead, we could choose to work with data structures as B\"ollig et al.~\cite{bollig2009} did for non-deterministic automata, and which we generalized in recent work~\cite{vanHeerdtSS18}. In these papers, partial representations of the determinized automaton are used in an iterative process to compute the generators of the state space of the minimal one. 
\section{Examples}\label{sec:examples}

\subsection{Monads preserving finite sets}\label{sec:finite}

If $T$ preserves finite sets, then there is a naive method to find a redundant element: assuming a finite set of generators $G$ for a $T$-algebra $X$, the set $T(G \setminus \{r\})$ is also finite for any $r \in G$.
Thus, we can loop over all $U \in T(G \setminus \{r\})$ and check if the generator map $g \colon G \to X$ satisfies $g^\sharp(U) = g(r)$.

\subsubsection{Alternating automata.}
\label{sec:afa}
\newcommand{\upclos}{\uparrow\!\!}

We now use our construction to get small alternating finite automata (AFAs) over a finite alphabet $A$.
AFAs generalize both non-deterministic and universal automata, where the latter are the dual of non-deterministic automata: a word is accepted when all paths reading it are accepting.
In an AFA, reading a symbol leads to a DNF formula (without negation) of next states.

We use the characterization of alternating automata due to Bertrand~\cite{BertrandR18}.
Given a partially ordered set $(P,\leq)$, an \emph{upset} is a subset $U$ of $P$ such that whenever $x \in U$ and $x \leq y$, then $y \in U$.
Given $Q \subseteq P$, we write $\upclos Q$ for the \emph{upward closure} of $Q$, that is the smallest upset of $P$ containing $Q$.
We consider the monad $\TAlt$ that maps a set $X$ to the set of all upsets of $\Ps(X)$.
Its unit is given by $\eta_X(x) = \upclos \{\{ x \}\}$ and its multiplication by
\[
	\mu_X(U) = \{V \subseteq X \mid \exists_{W \in U}\,\forall_{Y \in W}\,\exists_{Z \in Y}\,Z \subseteq V\}.
\]
The sets of sets in $\TAlt(X)$ can be seen as DNF formulae over elements of $X$: the outer powerset is interpreted disjunctively and the inner one conjunctively.
Accordingly, we define an algebra structure $\beta \colon \TAlt(2) \to 2$ on the output set $2$ by letting $\beta(U) = 1$ if $\{ 1 \} \in U$ and $\beta(U) = 0$ otherwise.
Recall from (\ref{eq:stddl}) in Section~\ref{sec:preliminaries} that such an algebra structure induces a distributive law.

We now explicitly spell out the $T$-minization algorithm that turns a DFA $(X,i,\delta)$ into a $\TAlt$-succinct $AFA$.
\begin{enumerate}
	\item Compute the reachable states $R$ of $(X,i,\delta)$ via a standard visit of its graph.
	\item Compute the corresponding freely-generated $\TAlt$-automaton $(\TAlt R,\hat{i},\hat{\delta})$, by generating all DNF formulae $\TAlt R$ on $R$.
	\item Compute the observable quotient $(M,i_0,\delta_0)$ of $(\TAlt R,\hat{i},\hat{\delta})$ via a standard minimization algorithm, such as the coalgebraic Paige--Tarjan algorithm~\cite{DBLP:journals/corr/DorschMSW17}.
	\item Compute a minimal set of generators for $M$ as follows. Consider the generator map $\id_M \colon M \to M$, for which we have that $\id^{\sharp}$ is the algebra map of $M$. Pick $r \in M$, and iterate over all DNF formulae $\varphi$ over $M \setminus \{r\}$; if there is $\varphi$ which is mapped to $r$ by the algebra map of $M$ (i.e., $\id^{\sharp}$), $r$ is redundant and can be removed from $M$. Repeat until no more elements are removed from $M$, which yields a minimal set of generators $G$.
	\item Return the $\TAlt$-succinct automaton $(G,i_0,i_0 \circ \eta)$.
\end{enumerate}
Note that every step of this algorithm terminates, as $X$ is finite and the size of $\TAlt R$ is $2^{2^{|R|}}$.

\begin{example}
	\begin{figure}
		\centering
		\subfloat[Deterministic automaton]{\label{fig:aa1}
			\centering
			\begin{tikzpicture}[initial text={},->,>=stealth',shorten >=1pt,auto,node distance=14ex,semithick]
				\node[initial,state] (0) {$q_0$};
				\node[state] (1) [below of=0] {$q_1$};
				\node[state,accepting] (2) [right of=0] {$q_2$};
				\node[state,accepting] (3) [below of=2] {$q_3$};
				\node[state] (4) [right of=2] {$q_4$};
				\path
				(0) edge node {$a$} (2)
				(0) edge node [swap] {$b$} (1)
				(1) edge node {$a$} (3)
				(1) edge node {$b$} (2)
				(2) edge node {$a,b$} (4)
				(3) edge node {$a$} (2)
				(3) edge node {$b$} (4)
				(4) edge [loop below] node {$a,b$} ();
			\end{tikzpicture}
		}\qquad%
		\subfloat[Small corresponding AFA]{\label{fig:aa2}
			\centering
			\begin{tikzpicture}[initial text={},->,>=stealth',shorten >=1pt,auto,node distance=14ex,semithick,square/.style={regular polygon,regular polygon sides=4}]
				\node[initial,state] (0) {$q_0$};
				\node[state,accepting] (1) [right of=0] {$q_2$};
				\node[square,draw,fill] (2) [below of=0] {};
				\node[state] (3) [below of=1] {$q_1$};
				\path
				(0) edge node {$a$} (1)
				(0) edge node {$b$} (3)
				(2) edge node {} (0)
				(2) edge [bend right] node {} (3)
				(3) edge node [swap] {$a$} (2)
				(3) edge node [swap] {$a,b$} (1);
			\end{tikzpicture}
		}
		\caption{Automata for the language $\{a, ba, bb, baa\}$}
	\end{figure}
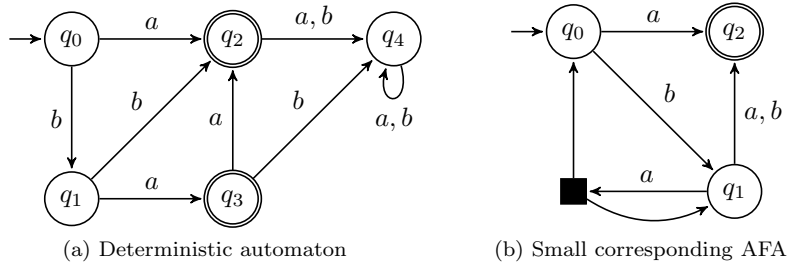
	Consider the regular language over $A = \{a, b\}$ given by the finite set $\{a, ba, bb, baa\}$.
	The minimal DFA accepting this language is given in \figurename~\ref{fig:aa1}.

	According to our construction, we first construct a $\TAlt$-automaton with state space freely generated from this automaton (which is already reachable). Then we $\TAlt$-minimize it in order to obtain a small AFA. 
	In this case, there is a unique minimal subset of $3$ generators: $G = \{q_0, q_1, q_2\}$.
	To see this, consider the languages $\llbracket{}q\rrbracket$ accepted by states $q$ of the deterministic automaton:
	\begin{align*}
		\llbracket{}q_0\rrbracket{} &
			= \{a, ba, bb, baa\} &
			\llbracket{}q_2\rrbracket{} &
			= \{\eword\} &
			\llbracket{}q_4\rrbracket{} &
			= \emptyset \\
		\llbracket{}q_1\rrbracket{} &
			= \{a, b, aa\} &
			\llbracket{}q_3\rrbracket{} &
			= \{\eword, a\}.
	\end{align*}
	These languages generate the states of the minimal $\TAlt$-automaton by interpreting joins as unions and meets as intersections.
	We note that $\llbracket{}q_4\rrbracket$ is just an empty join and $\llbracket{}q_3\rrbracket{} = (\llbracket{}q_0\rrbracket{} \cap \llbracket{}q_1\rrbracket{}) \cup \llbracket{}q_2\rrbracket{}$.\footnote{%
		Strictly speaking, we should take the upwards-closure of this disjunction (adding any possible set of elements to each conjunction as an additional clause).
		We choose to use the equivalent succinct formula both here and in the subsequent AFA construction to aid readability.
	}
	These are the only redundant generators.
	Removing them leads to the AFA in Figure~\ref{fig:aa2}.
	Here the black square represents a conjunction of next states.
\end{example}

\subsubsection{Complete Atomic Boolean Algebras}

We now consider the monad $C$ given by the double \emph{contravariant} powerset functor, namely $CX = 2^{2^X}$.
Here the outer powerset is treated disjunctively as in the case of $\TAlt$, and the sets provided by the inner powerset are interpreted as valuations.
Thus, elements of $C(X)$ can be seen as \emph{full} DNF formulae over $X$: every conjunctive clause contains for each $x \in X$ either $x$ or the negation $\overline{x}$ of $x$.
The unit assigns to an element $x$ the disjunction of all full conjunctions containing $x$, and the multiplication turns formulae of formulae into full DNF formulae in the usual way.
Algebras for this monad are known as complete atomic boolean algebras (CABAs).

Using the fact that $2$ is a free CABA ($2 \cong C(\emptyset)$), we obtain the following semantics for $C$-succinct automata: a set of sets of states is accepting if and only if it contains the exact set $F$ of accepting states.
This is different from alternating automata, where a subset of $F$ is sufficient.
Reading a symbol in a $C$-succinct automaton works as follows.
Suppose we are in a set of sets of states $S \in C(Q)$, where we read a symbol $a$.
The resulting set of sets contains $U \subseteq Q$ if and only if there is a set $V \in S$ such that every state in $V$ transitions into a set of sets containing $U$, and every state not in $V$ does not transition into any set of sets containing $U$.

Note that every DNF formula can be converted to a full DNF formula.
This implies that $C$-succinct automata can always be as small as the smallest AFAs for a given language.
With the following example we show that they can actually be strictly smaller. The $T$-minimization algorithm for AFA we have given in the previous section applies to this setting as well (including negation in DNF formulae).

\begin{example}
	\begin{figure}
		\centering
		\subfloat[Deterministic automaton]{\label{fig:ca1}
			\centering
			\begin{tikzpicture}[initial text={},->,>=stealth',shorten >=1pt,auto,node distance=14ex,semithick,square/.style={regular polygon,regular polygon sides=4}]
				\node[initial,state] (0) {$q_0$};
				\node[state] (1) [below of=0] {$q_1$};
				\node[state,accepting] (2) [right of=1] {$q_2$};
				\path
				(0) edge [swap] node {$a$} (1)
				(1) edge [bend left] node {$a$} (2)
				(2) edge [bend left] node {$a$} (1);
			\end{tikzpicture}
		}\qquad%
		\subfloat[$C$-succinct automaton]{\label{fig:ca2}
			\centering
			\begin{tikzpicture}[initial text={},->,>=stealth',shorten >=1pt,auto,node distance=14ex,semithick,square/.style={regular polygon,regular polygon sides=4}]
				\node[initial,state] (0) {$q_0$};
				\node[state] (1) [below of=0] {$q_1$};
				\node[square,draw,fill] (2) [right of=1] {};
				\path
				(0) edge [bend left] node {$a$} (1)
				(1) edge [bend left] node {$a$} (0)
				(1) edge node {$a$} (2);
			\end{tikzpicture}
		}
		\caption{Automata for the language of non-zero even words over $\{a\}$}
	\end{figure}
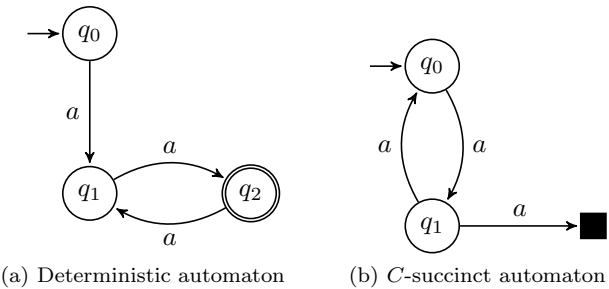
	Consider the regular language of words over the singleton alphabet $A = \{a\}$ whose length is non-zero and even.
	The minimal DFA accepting this language is shown in \figurename~\ref{fig:ca1}.
	We start the algorithm with the $C$-automaton with state space freely generated from this this DFA and merge the language-equivalent states.
	Initially, the set of generators is the set of states of the original DFA.
	By noting that the language accepted by $q_2$ is the negation of the one accepted by $q_1$, in full DNF form $\llbracket{}q_2\rrbracket{} = (\llbracket{}q_0\rrbracket{} \cap \overline{\llbracket{}q_1\rrbracket{}}) \cup (\overline{\llbracket{}q_0\rrbracket{}} \cap \overline{\llbracket{}q_1\rrbracket{}})$ (where for any language $U$ its complement is defined as $\overline{U} = A^* \setminus U$), we see that $q_2$ is redundant.
	The set of generators $\{q_0, q_1\}$ is minimal and corresponds to the $C$-succinct automaton in \figurename~\ref{fig:ca2}.
	We depict $C$-succinct automata in the same manner as AFAs, but note that their interpretation is different.
	Here the transition into the black square represents the transition into the conjunction of the negations of $q_0$ and $q_1$.

	We now show that there is no AFA with two states accepting the same language.
	Suppose such an AFA exists, and let the state space be $X = \{x_0, x_1\}$.
	Since $a$ and $aaa$ are not in the language but $aa$ is, one of these states must be accepting and the other must be rejecting.\footnote{%
		If there were no rejecting states, the only way to reject a word is by ending up in the empty set of sets of states.
		However, this means that extensions of that word are rejected as well.
		Similarly, if there are no accepting states one can only accept by ending up in $\upclos\{\emptyset\}$, which accepts everything.
	}
	Without loss of generality we assume that $x_0$ is rejecting and $x_1$ is accepting.
	The empty word is not in the language, so our initial configuration has to be $\upclos\{\{x_0\}\}$.
	Since $a$ is also not in the language, $x_0$ will have to transition to $\upclos\{\{x_0\}\}$ as well.
	However, this implies that $aa$ is not accepted by the AFA, which contradicts the assumption that it accepts the right language.
\end{example}

Unfortunately, the fact that the transition behavior of a set of states depends on states not in that set generally makes it difficult to work with $C$-succinct automata by hand.

\subsubsection{Symmetry}

We now consider succinct automata that exploit symmetry present in their accepted language.
Given a finite group $G$, consider the monad $G \times (-)$, where the unit pairs any element with the unit of $G$ and the multiplication applies the multiplication of $G$.
The algebras for $G \times (-)$ are precisely left group actions.
We assume an action on the alphabet $A$; if no such action is relevant, one may consider the trivial action $G \times A \xrightarrow{\pi_2} A$.
We also assume an action on the output set $O$.
Group actions will be denoted by a centered dot.
We consider the distributive law $\rho \colon G \times (O \times (-)^A) \nto O \times (G \times (-))^A$ given by
\[
	\rho_X(g, o, f) = (g \cdot o, \lambda a. (g, f(g^{-1} \cdot a))).
\]
We explain the resulting semantics of $(G \times (-))$-succinct automata in an example.

\begin{example}
	Consider the group $\mathtt{Perm}(\{a, b\}) = \{e, (ab)\}$ of permutations over elements $a$ and $b$.
	Here $e$ is the identity and $(ab)$ swaps $a$ and $b$.
	We consider the alphabet $A = \{a, b\}$ with an action $\mathtt{Perm}(A) \times A \to A$ given by applying the permutation to the element of $A$, and the output set $O = A \cup \{\bot\}$ with an action given by
	\begin{align*}
		(ab) \cdot a &
			= b &
			(ab) \cdot b &
			= a &
			(ab) \cdot \bot &
			= \bot.
	\end{align*}

	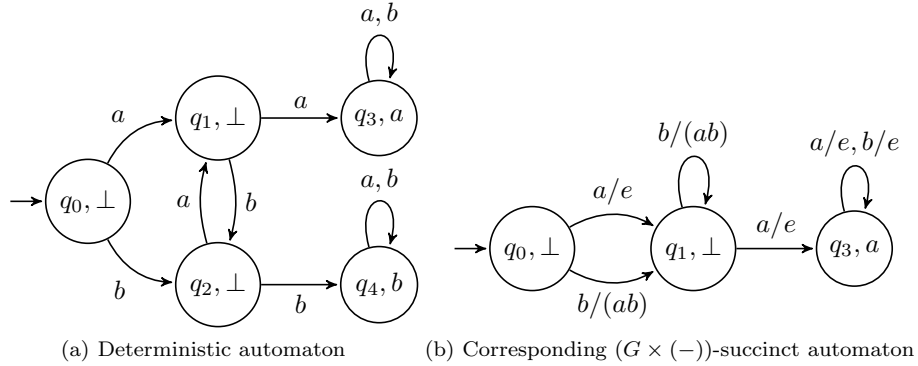
\begin{figure}
		\centering
		\subfloat[Deterministic automaton]{\label{fig:ga1}
			\centering
			\begin{tikzpicture}[initial text={},->,>=stealth',shorten >=1pt,auto,node distance=14ex,semithick]
				\node[initial,state] (0) {$q_0, \bot$};
				\node[state] (1) [above right=2ex and 6ex of 0] {$q_1, \bot$};
				\node[state] (2) [below right=2ex and 6ex of 0] {$q_2, \bot$};
				\node[state] (3) [right of=1] {$q_3, a$};
				\node[state] (4) [right of=2] {$q_4, b$};
				\path
				(0) edge [bend left] node {$a$} (1)
				(0) edge [bend right] node [swap] {$b$} (2)
				(1) edge node {$a$} (3)
				(1) edge [bend left=15] node {$b$} (2)
				(2) edge node [swap] {$b$} (4)
				(2) edge [bend left=15] node {$a$} (1)
				(3) edge [loop above] node {$a,b$} ()
				(4) edge [loop above] node {$a,b$} ();
			\end{tikzpicture}
		}%
		\subfloat[Corresponding $(G \times (-))$-succinct automaton]{\label{fig:ga2}
			\centering
			\begin{tikzpicture}[initial text={},->,>=stealth',shorten >=1pt,auto,node distance=14ex,semithick]
				\node[initial,state] (0) {$q_0, \bot$};
				\node[state] (1) [right of=0] {$q_1, \bot$};
				\node[state] (2) [right of=1] {$q_3, a$};
				\path
				(0) edge [bend left] node {$a/e$} (1)
				(0) edge [bend right] node [swap] {$b/(ab)$} (1)
				(1) edge node {$a/e$} (2)
				(1) edge [loop above] node {$b/(ab)$} ()
				(2) edge [loop above] node {$a/e,b/e$} ();
			\end{tikzpicture}
		}
		\caption{Automata outputting the first symbol to appear twice in a row}
	\end{figure}

	\figurename~\ref{fig:ga1} shows a deterministic automaton over the alphabet $A$ with outputs in $O$.
	States are labeled by pairs $(q, o)$, where $q$ is a state label and $o$ the output of the state.
	The recognized language is the one assigning to a word over $A$ the first input symbol appearing twice in a row, or $\bot$ if no such symbol exists.
	This deterministic automaton is in fact the minimal $(\mathtt{Perm}(A) \times (-))$-automaton.
	The action on its state space is defined by
	\begin{align*}
		(ab) \cdot q_0 &
			= q_0 &
			(ab) \cdot q_1 &
			= q_2 &
			(ab) \cdot q_2 &
			= q_1 &
			(ab) \cdot q_3 &
			= q_4 &
			(ab) \cdot q_4 &
			= q_3.
	\end{align*}
	We note that in the set of generators given by the full state space, $q_1$, $q_2$, $q_3$, and $q_4$ are redundant.
	After removing $q_2$, only $q_3$ and $q_4$ are redundant.
	Subsequently removing $q_4$ leaves no redundant elements.
	
	The final $(G \times (-))$-succinct automaton is shown in \figurename~\ref{fig:ga2}.
	Its actual configurations are pairs of a group element and a state.
	Transition labels are of the form $x/g$, where $x \in A$ and $g \in \mathtt{Perm}(A)$.
	If we are in a configuration $(g, q)$ and state $q$ has an associated output $o \in O$, the actual output is $g \cdot o$.
	On reading a symbol $x \in A$, we find the outgoing transition of which the label starts with the symbol $g^{-1} \cdot x$.
	Supposing this label contains a group element $g'$ and leads to a state $q'$, the resulting configuration is $(gg', q')$.
	For example, consider reading the word $bb$.
	We start in the configuration $(e, q_0)$.
	Reading $b$ here simply takes the transition corresponding to $b$, which brings us to $((ab), q_1)$.
	Now reading the second $b$, we actually read $(ab)^{-1} \cdot b = (ab) \cdot b = a$.
	This brings us to $((ab), q_3)$.
	The output is then given by $(ab) \cdot a = b$.
\end{example}

In general, sets of generators in this setting correspond to subsets in which all \emph{orbits} are represented.
The orbits of a set $X$ with a left group action are the equivalence classes of the relation that identifies elements $x, y \in X$ whenever there exists $g \in G$ such that $g \cdot x = y$.
Minimal sets of generators contain a single representative for each orbit. The algorithm given for AFAs in section~\ref{sec:afa} can be applied to this setting as well: step 4 will remove elements until only orbit representatives are left.

\subsection{Vector Spaces}\label{sec:vector}

We now exploit vector space structures.
Given a field $\F$, consider the free vector space monad $V$.
It maps each set $X$ to the set of functions $X \to \F$ with finite support (finitely many elements of $X$ are mapped to a non-zero value).
A function $f \colon X \to Y$ is mapped to the function $V(f) \colon V(X) \to V(Y)$ given by
\[
	V(f)(g)(y) = \sum_{x \in X, f(x) = y} g(x).
\]
The unit $\eta \colon X \to V(X)$ and multiplication $\mu \colon VV(X) \to V(X)$ of the monad are given by
\begin{align*}
	\eta(x)(x') = \begin{cases}
		1 & \text{if $x = x'$} \\
		0 & \text{if $x \ne x'$}
	\end{cases} &
		&
		\mu(f)(x) = \sum_{g \in V(X)} f(g) \cdot g(x) \in \F.
\end{align*}
Here $0$ and $1$, as well as addition and multiplication, are those of the field $\F$.
Elements of $V(X)$ can alternatively be written as formal sums $v_1x_1 + \cdots + v_nx_n$ with $v_i \in \F$ and $x_i \in X$ for all $i$.
We will use this notation in the example below.

Algebras for the free vector space monad are precisely vector spaces.
We use the output set $O = \F$, and the alphabet can be any finite set $A$.
Instantiating (\ref{eq:stddl}), this leads to a pointwise distributive law $\rho \colon V(O \times (-)^A) \nto O \times V(-)^A$ given at a set $X$ by
\[
	\rho(f) = \left(\sum_{(o, g) \in O \times X^A} f(o, g) \cdot o, \lambda a. \lambda x. \sum_{(o, g) \in O \times X^A, g(a) = x} f(o, g)\right).
\]
With these definitions, the $V$-succinct automata are weighted automata.
We note that if $\F$ is infinite, any non-trivial $V$-automaton will also be infinite.
However, we can still start from a given weighted automaton and apply a slight modification of Construction~\ref{cons:tmin}: minimize from the succinct representation, use the states of the succinct representation as initial set of generators, and finally find a minimal set of generators.
Moreover, we may add a reachability analysis, which in this case cannot lead to a larger automaton.
Thus, the resulting algorithm essentially comes down to the standard minimization algorithm for weighted automata~\cite{schutzenberger1961}, where the process of removing redundant generators is integrated into the minimization.
If $\F$ is finite and we do want to start from a deterministic automaton, we can consider this automaton as a weighted one by assigning each transition a weight of 1.

\begin{example}
	\begin{figure}
		\centering
		\subfloat[Deterministic automaton]{\label{fig:wa1}
			\centering
			\begin{tikzpicture}[initial text={},->,>=stealth',shorten >=1pt,auto,node distance=14ex,semithick]
				\node[initial,state] (0) {$q_0, 0$};
				\node[state] (1) [above right of=0] {$q_1, 1$};
				\node[state] (2) [right of=0] {$q_2, 1$};
				\node[state] (3) [below right of=0] {$q_3, 3$};
				\node[state] (4) [right of=2] {$q_4, 0$};
				\path
				(0) edge node {$a$} (1)
				(0) edge node {$b$} (2)
				(0) edge node [swap] {$c$} (3)
				(1) edge node {$a,b,c$} (2)
				(3) edge node [swap] {$a,b,c$} (2)
				(2) edge node {$a,b,c$} (4)
				(4) edge [loop above] node {$a,b,c$} ();
			\end{tikzpicture}
		}\qquad%
		\subfloat[Succinct weighted automaton]{\label{fig:wa2}
			\centering
			\begin{tikzpicture}[initial text={},->,>=stealth',shorten >=1pt,auto,node distance=14ex,semithick]
				\node[initial,state] (0) {$q_0, 0$};
				\node[state] (1) [above right of=0] {$q_1, 1$};
				\node[state] (2) [below right of=0] {$q_2, 1$};
				\path
				(0) edge node {$a,c$} (1)
				(0) edge node [swap] {$b,c/2$} (2)
				(1) edge [bend left] node {$a,b,c$} (2);
			\end{tikzpicture}
		}
		\caption{Succinctness via a weighted automaton}
	\end{figure}
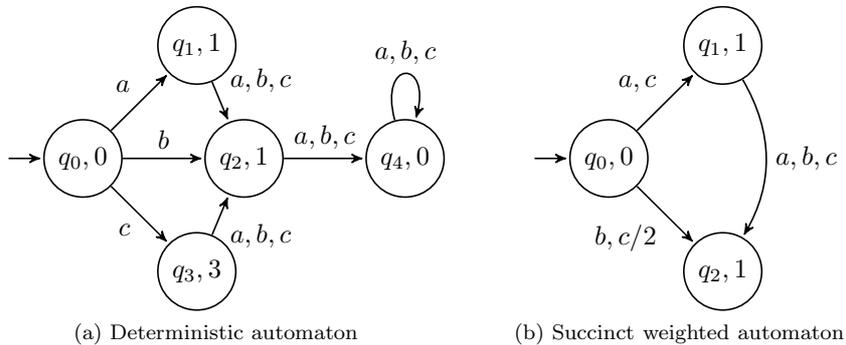

	Consider for $\F = \mathbb{R}$ the deterministic automaton in \figurename~\ref{fig:wa1}.
	This is a minimal automaton in $\Set$; the freely generated $V$-automaton is infinite, and so is its minimization.
	However, that minimization has the states of the automaton in \figurename~\ref{fig:wa1} as a set of generators.
	To gain insight into this minimization, we compute the languages accepted by those generators (apart from $q_0$):
	\begin{align*}
		&
			q_1 \colon &
			\eword &
			\mapsto 1 &
			a &
			\mapsto 1 &
			b &
			\mapsto 1 &
			c &
			\mapsto 1 \\
		&
			q_2 \colon &
			\eword &
			\mapsto 1 &
			a &
			\mapsto 0 &
			b &
			\mapsto 0 &
			c &
			\mapsto 0 \\
		&
			q_3 \colon &
			\eword &
			\mapsto 3 &
			a &
			\mapsto 1 &
			b &
			\mapsto 1 &
			c &
			\mapsto 1 \\
		&
			q_4 \colon &
			\eword &
			\mapsto 0 &
			a &
			\mapsto 0 &
			b &
			\mapsto 0 &
			c &
			\mapsto 0
	\end{align*}
	Words not displayed are mapped to $0$ by any state.
	The language of $q_0$ is the only one assigning non-zero values to certain words of length two, such as $aa$, and therefore $q_0$ cannot be a redundant generator.
	The other generators \emph{are} redundant: writing $\llbracket{}q\rrbracket$ for the language of a state $q$, $\llbracket{}q_4\rrbracket$ is just a zero-ary sum, and we have
	\begin{align*}
		\llbracket{}q_1\rrbracket &
			= \llbracket{}q_3\rrbracket - 2\llbracket{}q_2\rrbracket &
		\llbracket{}q_2\rrbracket &
			= \frac{1}{2}\llbracket{}q_3\rrbracket - \frac{1}{2}\llbracket{}q_1\rrbracket &
		\llbracket{}q_3\rrbracket &
			= \llbracket{}q_1\rrbracket + 2\llbracket{}q_2\rrbracket.
	\end{align*}
	Once $q_4$ is removed, all other generators are still redundant.
	Further removing $q_3$ makes $q_1$ and $q_2$ isolated.
	Therefore, $V$-minimization yields the weighted automaton shown in \figurename~\ref{fig:wa2}.
	Here a transition on an input $x \in A$ with weight $w \in \F$ receives the label $x/w$, or just $x$ if $w = 1$.
	Weights multiply along a path, and different possible paths add up to assign a value to a word.
	Reading $c$ from $q_0$, for example, we move to $q_1 + 2q_2$, which has an output of $1 + 2 * 1 = 3$.
\end{example}

In general, the (sub)sets of generators of a vector space are its subsets that span the whole space, and such a set of generators is minimal precisely when it forms a basis.
The weighted automaton resulting from our algorithm is the usual minimal weighted automaton for the language. 
Redundant elements can be found using standard techniques such as Gaussian elimination.

\section{Conclusions}\label{sec:conclusions}

We have presented a construction to obtain succinct representations of deterministic finite automata as automata with side-effects.
This construction is very general in that it is based on the abstract characterisation of side-effects as monads.
Nonetheless, it can be easily implemented.
An essential part of our construction is the computation of a minimal set of generators for an algebra.
We have provided an algorithm for this that works for any suitable $\Set$ monad.
We have applied the construction to several non trivial examples: alternating automata, automata with symmetries, CABA-structured automata, and weighted automata.

\paragraph{Related work}
This work revamps and extends results of Arbib and Manes~\cite{arbib1975_}, as discussed throughout the paper.
We note that most of their results are formulated in a more general category, whereas here we work specifically in $\Set$.
The reason for this is that we focus on the procedure for finding minimal sets of generators by removing redundant elements, which are defined using set subtraction (Definition~\ref{def:mingen}).
This limitation is already present in the work of Arbib and Manes, who spend little time on the subject and only study the non-deterministic case in detail.
Our main contribution, the general procedure for finding a minimal set of generators, is not present in their work.
It generalizes several techniques to obtain compact automaton representations of languages, some of them presented in the context of learning algorithms~\cite{denis2002,bollig2009,angluin2015}.
Preliminary results on generalizing succinct automaton constructions within a learning algorithm can be found in~\cite{vanHeerdtSS18}.

In \cite{MyersAMU15}, Myers et al.\ present a coalgebraic construction of canonical non-deterministic automata.
Their specific examples are the \'atomaton~\cite{brzozowski2011}, obtained from the atoms of the boolean algebra generated by the residual languages (the languages accepted by the states of the minimal DFA); the canonical RFSA; the minimal xor automaton~\cite{vuillemin2010}, actually a weighted automaton over field with two elements rather than a non-deterministic one; and what they call the distromaton, obtained from the atoms of the distributive lattice generated by the residual languages.
They further provide specific algorithms for obtaining some of their example succinct automata.

The underlying idea in the work of Myers et al.\ for finding succinct representations of algebras is similar to ours, and the deterministic structured automata they start from are equivalent: in their paper the deterministic automata live in a locally finite variety, which translates to the category of algebras for a monad that preserves finite sets (such as those in Section~\ref{sec:finite}).
They also define the succinct automaton using a minimal set of generators for the algebra, but instead of our algorithmic approach of getting to this set by removing redundant generators, they use a dual equivalence between finite algebras and a suitable modification of the category of sets and relations between them.
This seems to restrict their work to non-deterministic automata, although there may be an easy generalization: the equivalence would be with a modification of a Kleisli category.
A major difference with our work is that they have no general algorithm to construct the succinct automata; as mentioned, specific ones are provided for their examples. In fact, they provide no guidelines on how to find a suitable equivalence for a given variety.
On the other hand, their equivalences guarantee uniqueness up to isomorphism of the succinct automata, which is a desirable property for many applications.

The restriction in the work of Myers et al.\ to locally finite varieties means that our example of weighted automata over an infinite field (Section~\ref{sec:vector}) cannot be captured in their work.
Conversely, since both the \'atomaton and the distromaton are non-deterministic NFAs obtained from categories of algebras with more structure than JSLs, these examples are not covered by our work.
Their other examples, however, the canonical RFSA and the minimal xor automaton, are obtained using instances of our method as well.
The fact that the problem of finding in general a suitable equivalence is open means it is not trivial to determine whether our approach can be seen as a special case of a generalized version of theirs when we restrict to monads that preserve finite sets.

\paragraph{Future work}
The main question that remains is under which conditions the notion of a minimal set of generators actually describes a size-minimal set of generators.
Proposition~\ref{prop:geniso} provides a partial answer to this question, but its conditions fail to apply the majority of our examples, even though in some of these cases minimal does mean size-minimal.
A related question is whether we can find heuristics to increase the state space of a $T$-automaton in such a way that the number of generators decreases.
The reason the canonical RFSAs of Denis et al.~\cite{denis2002} are not always state-minimal NFAs is because the states of these NFAs, seen as singletons in the determinized automaton, in general are not reachable.
Hence, removing unreachable states from a $T$-automaton may increase the size of minimal sets of generators, which is why Construction~\ref{cons:tmin} does not include a reachability analysis.
Although finding state-minimal NFAs is PSPACE-complete, a moderate gain might still be possible.

\bibliographystyle{plain}
\bibliography{main}

\end{document}